\documentclass[ASNA,twocolumn]{USG} 
\usepackage{anyfontsize} %

\usepackage{colortbl}   
\usepackage{xcolor}     
\usepackage{cuted}
\usepackage{steinmetz}
\usepackage{mathtools}
\graphicspath{{./images/}}
\specialissue{Advances in Quantum Control, Robust Control and Negative Imaginary Systems}

\newtheorem{example}{Example}[section]

\newcommand{\bthm}{\begin{theorem}}
\newcommand{\ethm}{\end{theorem}}
\newcommand{\blem}{\begin{lemma}}
\newcommand{\elem}{\end{lemma}}
\newcommand{\bex}{\begin{example}}
\newcommand{\eex}{\end{example}}
\newcommand{\bprop}{\begin{proposition}}
\newcommand{\eprop}{\end{proposition}}
\newcommand{\bplm}{\begin{problem}}
\newcommand{\eplm}{\end{problem}}
\newcommand{\bmrk}{\begin{remark}}
\newcommand{\emrk}{\end{remark}}
\newcommand{\bdfn}{\begin{definition}}
\newcommand{\edfn}{\end{definition}}
\newcommand{\bcor}{\begin{corollary}}
\newcommand{\ecor}{\end{corollary}}

\newcommand{\beq}{\begin{equation}}
\newcommand{\eeq}{\end{equation}}
\newcommand{\beqm}{\begin{equation*}}
\newcommand{\eeqm}{\end{equation*}}
\newcommand{\beqn}{\begin{eqnarray}}
\newcommand{\eeqn}{\end{eqnarray}}
\newcommand{\beqnm}{\begin{eqnarray*}}
\newcommand{\eeqnm}{\end{eqnarray*}}
\newcommand{\bea}{\begin{align}}
\newcommand{\eea}{\end{align}}
\newcommand{\bead}{\begin{aligned}}
\newcommand{\eead}{\end{aligned}}
\newcommand{\beam}{\begin{align*}}
\newcommand{\eeam}{\end{align*}}
\newcommand{\bs}{\begin{subequations}}
\newcommand{\es}{\end{subequations}}

\newcommand{\bei}{\begin{itemize}}
\newcommand{\eei}{\end{itemize}}
\newcommand{\bed}{\begin{description}}
\newcommand{\eed}{\end{description}}
\newcommand{\bee}{\begin{enumerate}}
\newcommand{\eee}{\end{enumerate}}

\newcommand{\bey}{\begin{array}}
\newcommand{\eey}{\end{array}}

\newcommand{\beb}{\begin{thebibliography}{99}}
\newcommand{\eeb}{\end{thebibliography}}

\newcommand{\mbf}{\mathbf}

\newcommand{\la}{\label}

\articletype{RESEARCH ARTICLE}%

\journal{International Journal of Robust and Nonlinear Control}
\volume{0}
\copyyear{2026}
\startpage{1}
\articledoi{}

\begin{document}

\title{On Realization of Back-Action-Evading Measurements and Quantum Non-Demolition Variables via Linear Systems Engineering}
\transtitle{On Realization of Back-Action-Evading Measurements and Quantum Non-Demolition Variables via Linear Systems Engineering}
\author[1]{Zhiyuan Dong}[https://orcid.org/0000-0002-8762-7943]
\author[2]{Weichao Liang}[https://orcid.org/0000-0002-2467-0469]
\author[3,4]{Guofeng Zhang}[https://orcid.org/0000-0001-5854-5247]

\authormark{Zhiyuan Dong \textsc{et al.}}
\titlemark{On Realization of Back-Action-Evading Measurements and Quantum Non-Demolition Variables via Linear Systems Engineering}

\address[1]{\orgdiv{School of Science, College of Frontier Sciences, }\orgname{Harbin Institute of Technology, Shenzhen, }%
\orgaddress{\state{Guangdong, }\country{China}}}

\address[2]{\orgdiv{School of Automation Science and Engineering, Faculty of Electronic and Information Engineering, }\orgname{Xi'an Jiaotong University, }%
\orgaddress{\state{Shaanxi, }\country{China}}}

\address[3]{\orgdiv{Department of Applied Mathematics, }\orgname{The Hong Kong Polytechnic University, }%
\orgaddress{\state{Hong Kong SAR, }\country{China}}}

\address[4]{\orgdiv{Research Institute for Quantum Technology, }\orgname{The Hong Kong Polytechnic University, }%
\orgaddress{\state{Hong Kong SAR, }\country{China}}}


\corres{Guofeng Zhang  (\email{guofeng.zhang@polyu.edu.hk})}


\presentaddress{This is sample for present address text this is sample for present address text.}


\keywords{linear quantum systems | back-action-evading measurements | quantum non-demolition variables | quantum coherent feedback control}

\transkeywords{linear quantum systems | back-action-evading measurements | quantum non-demolition variables | quantum coherent feedback control}

\abstract[ABSTRACT]{We establish a framework for realizing back-action-evading (BAE) measurements and quantum non-demolition (QND) variables in linear quantum systems. The key condition, a purely imaginary Hamiltonian with a real or imaginary coupling operator, enables BAE measurements of conjugate observables. Symmetric coupling further yields QND variables. For non-compliant systems, coherent feedback is designed to engineer BAE measurements. Crucially, the QND interaction condition simultaneously ensures BAE measurements and promotes the coupling operator to a QND observable.}

 \transabstract[transABSTRACT]{We establish a framework for realizing back-action-evading (BAE) measurements and quantum non-demolition (QND) variables in linear quantum systems. The key condition, a purely imaginary Hamiltonian with a real or imaginary coupling operator, enables BAE measurements of conjugate observables. Symmetric coupling further yields QND variables. For non-compliant systems, coherent feedback is designed to engineer BAE measurements. Crucially, the QND interaction condition simultaneously ensures BAE measurements and promotes the coupling operator to a QND observable.}




\copyright{@ 2026 John Wiley \& Sons Ltd.}


\maketitle


\section{Introduction} \label{sec:intro}

In a typical indirect measurement within a quantum system \cite{belavkin1989nondemolition,Belavkin1994,HMW95,BvHJ07}, an auxiliary quantum system, often termed a probe, is deliberately coupled to the system of interest. Information about the latter is inferred by performing a direct measurement on the probe after their interaction. In this framework, the probe serves as an input channel before the interaction and transforms into an output channel afterwards. A fundamental consequence of this information acquisition process is the introduction of measurement back-action, which inevitably disturbs the dynamics of the system of interest.

A key objective in quantum control and metrology is to identify and protect parts of a quantum system that are inherently immune to such measurement-induced disturbances. If a specific subsystem of a linear quantum system is inaccessible to the probe—meaning it is neither influenced by the input probe nor detectable in the output—it is termed a decoherence-free subsystem (DFS) \cite{NY13,PDP17,ZGPG18,ZPL20}. Importantly, a DFS is not an isolated system; it interacts with other parts of the composite system; \cite[Fig. 2]{ZGPG18}. Its decoherence-free property implies that its state evolves unitarily and remains separable from the rest of the system's degrees of freedom, provided the system-environment coupling respects a certain symmetry. In the Heisenberg picture, this concept is closely related to the decoherence-free subspaces in the Schrödinger picture \cite[Sec. III-C]{DZWW23}, which play a crucial role in protecting quantum information for quantum computation \cite{TV08}.

While a DFS represents a structural immunity where certain system degrees of freedom are statically inaccessible to the probe, a more dynamical form of back-action evasion can be engineered at the level of input-output relationships. Specifically, a quantum back-action evading (BAE) measurement is achieved when a particular observable of the output probe, denoted $\boldsymbol{y}_q$, is insensitive to a specific conjugate observable of the input probe, denoted $\boldsymbol{u}_p$. In the language of linear systems and the Kalman canonical form, this condition corresponds to a zero transfer function from the input $\boldsymbol{u}_p$ to the output $\boldsymbol{y}_q$. This frequency-domain characterization is consistent with the standard linear systems approach to quantum measurement theory, where transfer functions have been rigorously established as a valid tool for describing BAE conditions \cite[Fig. 2]{NY14}. In continuous-variable BAE experiments, the quantity of interest is typically the measured quadrature of the output signal \cite{Liu_2022}; a zero transfer function in the frequency domain guarantees that the back-action noise from input $\boldsymbol{u}_p$ is evaded across all relevant frequencies. Quantum BAE measurements are therefore of paramount importance in quantum sensing and metrology \cite{WC13,BQND24}, as they enable precision beyond the standard quantum limit by selectively rendering the measurement immune to specific fluctuations of the input field.

The concept of back-action evasion focuses on the input-output relation—evading back-action from a specific input channel. A related but distinct paradigm concerns the system observables themselves: the notion of quantum non-demolition (QND) measurements \cite{BK96}. According to the Heisenberg uncertainty principle, the precision of simultaneous measurements of canonically conjugate observables (e.g., position and momentum, or amplitude and phase) is fundamentally limited. Consequently, an indirect measurement of an observable $\mathcal{O}$ generally introduces back-action onto its conjugate counterpart. However, if the Heisenberg-picture evolution of $\mathcal{O}$ is a function only of itself and is independent of its conjugate observables, then the measurement of $\mathcal{O}$ does not perturb its own future evolution. This defines a QND variable: an observable $\mathcal{O}$ is a QND variable if it can be measured repeatedly, with the measurement back-action confined entirely to its conjugate variables, leaving the measured observable's future trajectory unaffected \cite{TC10,WC13,NY14,ZGPG18,LOW+21}. While back-action evasion ensures that a specific output is free from a specific input noise, QND measurement ensures that the observable of interest is free from the measurement back-action generated by its own measurement.

This paper presents a systematic framework for realizing BAE measurements and designing QND variables using linear systems engineering. By leveraging the structural properties of linear quantum systems, we unify the concepts of BAE and QND within a common state-space representation, providing both a theoretical foundation and practical design criteria for BAE measurements and QND variables in continuous-mode measurement scenarios.
Some  notation commonly used in this paper are listed as follows.

\textit{Notation}.
\begin{itemize}
\item $\imath =\sqrt{-1}$ is the imaginary unit. $I_{k}$ is the identity matrix and $0_{k}$ the zero matrix in $\mathbf{C}^{k \times k}$. $\delta_{ij}$ denotes the Kronecker delta;
i.e.,~$I_k=[\delta_{ij}]$. $\delta(t)$ is the Dirac delta function. ${\rm Re}(X)$ denotes the real part of the matrix $X$; while ${\rm Im}(X)$ is its imaginary part.

\item $x^{\ast}$ denotes the complex conjugate of a complex number $x$ or
the adjoint of an operator $x$. Clearly. $(xy)^\ast = y^\ast x^\ast$.  Given two operators $\bf{x}$ and $\bf{y}$, their commutator is defined to be $[\bf{x},\bf{y}] \triangleq \bf{x}\bf{y}-\bf{y}\bf{x}$.

\item For a matrix $X=[x_{ij}]$ with  number or operator entries,  $X^{\top}=[x_{ji}]$ is the matrix transpose. Denote $X^{\#}=[x_{ij}^{\ast}]$, and $X^{\dagger}=(X^{\#})^{\top}$. For a vector $x$, we define $\breve{x}\triangleq \bigl[
\begin{smallmatrix}
x \\
x^{\#}
\end{smallmatrix}
\bigr]$.

\item Given two \textit{column} vectors of operators $\bf{X}$ and $\bf{Y}$ of the same length, their commutator is defined as
\begin{equation*}
[\bf{X},\bf{Y}^\top] \triangleq ([\bf{X}_j,\bf{Y}_k] ) =\bf{X}\bf{Y}^\top- (\bf{Y}\bf{X}^\top)^\top.    
\end{equation*}
If $\bf{X}$ is a \textit{row} vector of operators of length $m$ and $\bf{Y}$ is a \textit{column} vector of operators of length $n$, their commutator is defined as
\begin{equation}\label{dec19-6}
[\mbf{X},\mbf{Y}] \triangleq \left(\begin{array}{@{}ccc@{}}                               [\mbf{x}_1,\mbf{y}_1] & \cdots & [\mbf{x}_m,\mbf{y}_1] \\
                               {\vdots} & \ddots & \vdots \\
                               {[\mbf{x}_1,\mbf{y}_n]} & \cdots & [\mbf{x}_m,\mbf{y}_n]
\end{array}
                           \right)_{n\times m}=(\mbf{X}^\top\mbf{Y}^\top)^\top-\mbf{Y}\mbf{X}.
\end{equation}

\item Let $J_{k} \triangleq \mathrm{diag}(I_k,-I_k)$. For a matrix $X\in
\mathbf{C}^{2k\times 2r}$, define its $\flat$-adjoint by $X^{\flat }
\triangleq J_{r}X^{\dagger}J_{k}$. The $\flat$-adjoint operation enjoys the following  properties:
\begin{equation*}
(x_1 A + x_2 B)^{\flat}=x_1^{*}
A^{\flat} + x_2^{*} B^{\flat}, \ \ (AB)^{\flat}=B^{\flat} A^{\flat}, \ \
(A^{\flat})^{\flat}=A,    
\end{equation*}
where $x_1,x_2\in \mathbf{C}$.

\item Given two matrices $U$, $V\in \mathbf{C}^{k\times r}$, define their
 \emph{doubled-up} \cite{GJN10} as  $\Delta
(U,V) \triangleq
\bigl[
\begin{smallmatrix}
U & V \\
V^{\#} & U^{\#}
\end{smallmatrix}
\bigr]$.  The set
of doubled-up matrices is closed under addition, multiplication and $\flat$ adjoint operation.


\item Let $\mathbb{J}_{k} \triangleq \bigl[
\begin{smallmatrix}
0_{k} & I_k \\
-I_k & 0_{k}
\end{smallmatrix}
\bigr]$. For a matrix $X\in \mathbf{C}^{2k\times 2r}$, define its $\sharp$-
\emph{adjoint} $X^{\sharp}$ by $X^{\sharp} \triangleq -\mathbb{J}_{r}X^{\dagger}
\mathbb{J}_{k}$. The $\sharp$-\emph{adjoint} satisfies properties similar to
the usual adjoint, namely
\begin{equation*}
(x_1 A + x_2 B)^{\sharp}=x_1^{*} A^{\sharp} + x_2^{*}
B^{\sharp}, \ \ (AB)^{\sharp}=B^{\sharp} A^{\sharp},  \ \ (A^{\sharp})^{
\sharp}=A.    
\end{equation*}

\end{itemize}

The remainder of this paper is organized as follows. 
Section \ref{preliminaries} briefly reviews the essential preliminaries of linear quantum systems, including the quantum stochastic differential equation, the Heisenberg-picture dynamics, and the associated input-output relations. 
In Section \ref{sec:qndbae}, we establish a systematic framework for characterizing BAE measurements. Specifically, we first derive sufficient conditions on the system parameters that guarantee bilateral BAE measurements, and then subsequently analyze unilateral BAE conditions under relaxed assumptions on the Hamiltonian and coupling operator.
Section \ref{QND INT} is devoted to a detailed investigation of the QND interaction. The analysis proceeds from the single-input-single-output (SISO) case to the more general multi-input-multi-output (MIMO) setting, employing annihilation-creation operator representation. The Kalman canonical form is also used to analyze the input-output structure associated with BAE measurements and QND variables. In Section \ref{synthesis}, we address the synthesis problem. A coherent feedback control scheme is proposed to actively engineer BAE measurements for systems that do not inherently satisfy the structural conditions derived earlier. In addition, a direct coupling approach for realizing QND variables is presented and illustrated using a quantum optomechanical system as a case study. Finally, Section \ref{Conclu} concludes the paper and provides a brief outlook on potential future research directions.

\section{Preliminaries}\label{preliminaries}

The time evolution of a linear quantum system is governed by a unitary operator $\mbf{U}(t,t_0)$, which satisfies the following quantum stochastic differential equation (QSDE)

\begin{equation}\label{dU}\begin{aligned}
d\mbf{U}(t,t_0)=&\bigg[\left(-\imath \mbf{H}-\frac{1}{2}\mbf{L}^\dagger \mbf{L}\right)dt-\mbf{L}^\dagger S d\mbf{B}_{\rm in}(t)+d\mbf{B}_{\rm in}^\dagger(t)\mbf{L} \\
&+{\rm Tr}\left[(S-I)d\mbf{\Lambda}^\top(t)\right]\bigg]\mbf{U}(t,t_0),  \end{aligned}\end{equation}
where $\mbf{U}(t_0,t_0)=I$. $S$ denotes the scattering matrix (a unitary matrix), $\mbf{L}=\left[\begin{array}{cc}
C_- & C_+ 
\end{array}\right]\breve{\mbf{a}}$ is the coupling operator, and $\mbf{H}$ is the system Hamiltonian of the form $\mbf{H}=\frac{1}{2}\breve{\mbf{a}}^\dagger\Omega\breve{\mbf{a}}$, where $\Omega=\Delta(\Omega_-,\Omega_+)$ is Hermitian. The integrated input annihilation, creation, and gauge processes are
\begin{equation*}\begin{aligned}
&\mbf{B}_{\rm in}(t)=\int_{t_0}^t \mbf{b}_{\rm in}(s)ds, ~~ \mbf{B}_{\rm in}^\#(t)=\int_{t_0}^t \mbf{b}_{\rm in}^\#(s)ds, \\
&\mbf{\Lambda}(t)=\int_{t_0}^t \mbf{b}_{\rm in}^\#(s) \mbf{b}_{\rm in}^\top(s)ds,
\end{aligned}\end{equation*}
respectively, and satisfy the following commutation relation
\begin{equation*}
\left[\mbf{b}_j(t),\mbf{b}^\ast_k(s)\right]=\delta_{jk}\delta(t-s).
\end{equation*}
In the Heisenberg picture, denote a system operator $\mbf{X}(t)=\mbf{U}^\ast(t,t_0)(\mbf{X}\otimes I_{\rm field})\mbf{U}(t,t_0)$, and its dynamical evolution can be described by a quantum stochastic differential equation
\begin{equation}\label{dX}\begin{aligned}
d\mbf{X}(t)=&\mathcal{L}(\mbf{X}(t))dt+{\rm Tr}\left[(S^\dagger \mbf{X}(t)S-\mbf{X}(t))d\mbf{\Lambda}^\top(t)\right] \\
&+[\mbf{L}^\dagger(t),\mbf{X}(t)]Sd\mbf{B}_{\rm in}(t)+d\mbf{B}_{\rm in}^\dagger(t) S^\dagger[\mbf{X}(t),\mbf{L}(t)],    
\end{aligned}\end{equation}
where the superoperator 
\begin{equation*}
\mathcal{L}(\mbf{X}(t))=-\imath[\mbf{X}(t),\mbf{H}(t)]+\frac{1}{2}\mbf{L}^\dagger(t)[\mbf{X}(t),\mbf{L}(t)]+\frac{1}{2}[\mbf{L}^\dagger(t),\mbf{X}(t)]\mbf{L}(t).    
\end{equation*}
The input-output relation is given by
\begin{equation}\label{eq:io}
d\mbf{B}_{\rm out}(t)=\mbf{L}(t)dt+Sd\mbf{B}_{\rm in}(t).
\end{equation}

In the Schr{\"o}dinger picture, the conditioned density operator $\rho_c$ over the joint system-field space is governed by the following quantum stochastic master equation (QSME)
\begin{equation}\label{Jan12-2}\begin{aligned}
d\rho_c(t)=&\mathcal{L}^\star(\rho_c(t))dt+\sum_{i=1}^m\big\{\mbf{L}_i\rho_c(t)+\rho_c(t)\mbf{L}_i^* \\
&-{\rm Tr}[\rho_c(t)(\mbf{L}_i+\mbf{L}_i^*)]\rho_c(t)\big\}d\nu_i(t),
\end{aligned}\end{equation}
where the system contains $m$ input-output channels, the superoperator $\mathcal{L}^\star(\rho_c(t))$ is defined as
\begin{equation*}\begin{aligned}
\mathcal{L}^\star(\rho_c(t))=&-\imath[\mbf{H},\rho_c(t)]+\sum_{i=1}^m\mbf{L}_i \rho_c(t) \mbf{L}_i^* \\
&-\frac{1}{2}\sum_{i=1}^m\mbf{L}_i^* \mbf{L}_i\rho_c(t)-\frac{1}{2}\sum_{i=1}^m\rho_c(t)\mbf{L}_i^* \mbf{L}_i,
\end{aligned}\end{equation*}
and the innovation process 
\begin{equation*}\begin{aligned}
d\nu(t)=&\left[d\nu_1(t), d\nu_2(t)\cdots,d\nu_m(t)\right]^\top \\
=&d\mbf{Q}_{\rm out}(t)-{\rm Tr}
[\rho_c(t)(\mbf{L}+\mbf{L}^\#)]dt,
\end{aligned}\end{equation*}
where $d\nu(t)d\nu^\top(t)=Idt$, and $\mbf{Q}_{\rm out}(t)=\frac{1}{\sqrt{2}}\left(\mbf{B}_{\rm out}(t)+\mbf{B}_{\rm out}^\#(t)\right)$ being assumed to be observed continuously.

\section{Structural characterization of BAE measurements}\label{sec:qndbae}

As introduced in Section \ref{sec:intro}, a quantum BAE measurement renders a chosen output observable insensitive to a specific conjugate input quadrature, which is mathematically equivalent to imposing a zero transfer function between that input–output pair, e.g., $\mathbb{G}_{\boldsymbol{u}_p\rightarrow\boldsymbol{y}_q}[s]=0$. This condition ensures that the back-action noise originating from the selected input quadrature cannot propagate through the system dynamics into the measurement record. In the following, we formalize the BAE condition within the linear quantum system framework using linear stochastic differential equations and transfer-function matrices. By expressing the input–output mapping in the state-space form, we translate the condition into algebraic constraints on the system  matrices, from which explicit criteria for the analysis and design and of quantum BAE measurements are derived.

A linear quantum system with the triple $(S,\mbf{L},\mbf{H})$ language \cite{GJ09} can be described in the annihilation-creation form
\begin{equation}\label{system:ani}\begin{aligned}
\dot{\breve{\mbf{a}}}&=\mathcal{A}\breve{\mbf{a}}+\mathcal{B}\breve{\mbf{b}}_{\rm in}, \\
\breve{\mbf{b}}_{\rm out}&=\mathcal{C}\breve{\mbf{a}}+\mathcal{D}\breve{\mbf{b}}_{\rm in},
\end{aligned}\end{equation}
where the system matrices 
\begin{equation*}\begin{aligned}
&\mathcal{C}=\Delta(C_-,C_+), ~~ \mathcal{D}=\Delta(S,0), \\
&\mathcal{B}=-\mathcal{C}^\flat\mathcal{D}, ~~ \mathcal{A}=-\imath J_n\Omega-\frac{1}{2}\mathcal{C}^\flat\mathcal{C}.
\end{aligned}\end{equation*}

Alternatively, the linear quantum system \eqref{system:ani} is equivalent to the following
real quadrature operator representation
\begin{equation}\label{system:qua}\begin{aligned}
\dot{\mbf{x}}&=\mathbb{A}\mbf{x}+\mathbb{B}\mbf{u}, \\
\mbf{y}&=\mathbb{C}\mbf{x}+\mathbb{D}\mbf{u},
\end{aligned}\end{equation}
where system state, input, and output are
\begin{equation*}\begin{aligned}
\mbf{x}=\left[\begin{array}{c}
\mbf{q} \\
\mbf{p}
\end{array}\right], ~~ \mbf{u}=\left[\begin{array}{c}
\mbf{q}_{\rm in} \\
\mbf{p}_{\rm in}
\end{array}\right], ~~ \mbf{y}=\left[\begin{array}{c}
\mbf{q}_{\rm out} \\
\mbf{p}_{\rm out}
\end{array}\right],
\end{aligned}\end{equation*}
respectively, and the corresponding constant matrices are
\begin{equation}\la{eq:mar24_ABCD}\begin{aligned}
&\mathbb{D}=V_m \mathcal{D} V_m^\dagger =\left[
                                          \begin{array}{cc}
                                            \mathrm{Re}(S) & -\mathrm{Im}(S) \\
                                            \mathrm{Im}(S) & \mathrm{Re}(S) \\
                                          \end{array}
                                        \right], \\
                                        &\mathbb{C}=V_m \mathcal{C} V_n^\dagger =\left[
                                                                             \begin{array}{cc}
                                                                               \mathrm{Re}(C_-+C_+) & -\mathrm{Im}(C_--C_+) \\
                                                                               \mathrm{Im}(C_-+C_+) & \mathrm{Re}(C_--C_+) \\
                                                                             \end{array}
                                                                           \right], \\
&\mathbb{B}=V_n \mathcal{B} V_m^\dagger = -\left[
                                            \begin{array}{cc}
                                              \mathrm{Re}(C_-^\dagger-C_+^\dagger) & -\mathrm{Im}(C_-^\dagger-C_+^\dagger) \\
                                              \mathrm{Im}(C_-^\dagger+C_+^\dagger) & \mathrm{Re}(C_-^\dagger+C_+^\dagger) \\
                                            \end{array}
                                          \right]\mathbb{D}, \\
                                          &\mathbb{A} = V_n \mathcal{A} V_n^\dagger = \mathbb{J}_n\mathbb{H}-\frac{1}{2}\mathbb{C}^\sharp \mathbb{C},
\end{aligned}\end{equation}
with
\begin{equation}\la{eq:mar24_JH}
\mathbb{J}_n\mathbb{H}=\mathbb{J}_n V_n \Omega V_n^\dagger=\left[
                                                             \begin{array}{cc}
                                                               \mathrm{Im}(\Omega_-+\Omega_+) & \mathrm{Re}(\Omega_--\Omega_+) \\
                                                               -\mathrm{Re}(\Omega_-+\Omega_+) & \mathrm{Im}(\Omega_--\Omega_+) \\
                                                             \end{array}
                                                           \right],
\end{equation}
and
\begin{equation}\label{eq:apr_C_sharp_C}
\mathbb{C}^\sharp\mathbb{C}=
\begin{bmatrix}
\begin{aligned}
&\mathrm{Re}(C_-^\dagger C_--C_+^\dagger C_++C_-^\dagger C_+-C_+^\top C_-^\#) \\
&\quad \quad \quad -\mathrm{Im}(C_-^\dagger C_-+C_+^\dagger C_++C_-^\dagger C_+-C_+^\top C_-^\#) \\[6pt]
&\mathrm{Im}(C_-^\dagger C_-+C_+^\dagger C_+-C_-^\dagger C_++C_+^\top C_-^\#) \\
&\quad \quad \quad \mathrm{Re}(C_-^\dagger C_--C_+^\dagger C_+-C_-^\dagger C_++C_+^\top C_-^\#)
\end{aligned}
\end{bmatrix}.
\end{equation}

In the following subsections, we respectively address the sufficient conditions under which bilateral and unilateral quantum BAE measurements can be implemented.


\subsection{Bilateral quantum BAE measurements}\label{sec:bilat}

In this subsection, we focus on the roles played by the system parameters $S$, $\mathcal{C}$, and $\Omega$ in achieving bilateral quantum BAE measurements, where the cross-coupling between conjugate input and output quadratures vanishes in both directions. By imposing structural conditions on these matrices, the transfer matrix becomes block-diagonal, thereby enabling the simultaneous evasion of back-action noise on both quadratures.

Firstly, we assume that $\Omega$ is purely imaginary and $\mathcal{C}$ is real. The scattering operator $S$ is real, e.g., $S=I$ for simplicity. The following result can be derived directly.

\bprop\label{prop:BAE}
If $\Omega$ is purely imaginary, both $S$ and $\mathcal{C}$ are real, then the transfer matrix is of the form
\beq\label{Mar15-1}
\begin{aligned}
&\mathbb{G}[s] =  \left[ 
\bey{cc}
\mathbb{G}_q[s] & 0 \\
0  & \mathbb{G}_p[s] 
\eey
\right],
\end{aligned}
\eeq
where
\begin{equation*}
\begin{aligned}
\mathbb{G}_q[s]=
S-\mathbb{C}_q [sI+\imath(\Omega_-+\Omega_+)+\frac1{2}\mathbb{C}_p^\top \mathbb{C}_q]^{-1} \mathbb{C}_p^\top, \\
\mathbb{G}_p[s]=
S-\mathbb{C}_p [sI+\imath(\Omega_--\Omega_+)+\frac1{2}\mathbb{C}_q^\top \mathbb{C}_p]^{-1} \mathbb{C}_q^\top,
\end{aligned}    
\end{equation*}
with
\beq\label{eq:jun29_C+-}
\mathbb{C}_q \triangleq C_-+C_+,  \ \ \ 
\mathbb{C}_p \triangleq C_--C_+.
\eeq
Thus, BAE measurements of $\mbf{q}_{\rm out}$ with respect to $\mbf{p}_{\rm in}$ and $\mbf{p}_{\rm out}$ with respect to $\mbf{q}_{\rm in}$ are realized.
\eprop

\begin{proof}
In this case, Eq. \eqref{eq:mar24_JH} reduces to
\begin{equation}\label{eq:mar24_JH5}
\mathbb{J}_n\mathbb{H}=-\imath \left[
                         \begin{array}{cc}
                           \Omega_-+\Omega_+ & 0 \\
                           0 & \Omega_--\Omega_+ \\
                         \end{array}
                       \right].
\end{equation}
Similarly, by Eq. \eqref{eq:mar24_ABCD},
\begin{equation}\la{eq:mar24_BC}\begin{aligned}
&\mathbb{C}=V_m \mathcal{C} V_n^\dagger =\left[
                                                                             \begin{array}{cc}
                                                                              C_-+C_+ & 0 \\
                                                                               0 & C_--C_+ \\
                                                                             \end{array}
                                                                           \right], \\
&\mathbb{B}=V_n \mathcal{B} V_m^\dagger =-\left[
                                            \begin{array}{cc}
                                              (C_--C_+)^\top & 0 \\
                                            0& (C_-+C_+)^\top) \\
                                            \end{array}
                                          \right],
\end{aligned}\end{equation}
and accordingly Eq. \eqref{eq:apr_C_sharp_C} reduces to
\begin{equation}\la{eq:mar24_A}\begin{aligned}
\mathbb{C}^\sharp\mathbb{C}
 =\left[
                                                    \begin{array}{cc}
                                                     (C_--C_+)^\top (C_-+C_+) & 0 \\
                                                      0&(C_-+C_+)^\top  (C_--C_+)  \\
                                                    \end{array}
                                                  \right].
\end{aligned}\end{equation}
According to Eqs. \eqref{eq:mar24_JH5} and \eqref{eq:mar24_A}, we have
\beq\la{eqLmar34_A5}\begin{aligned}
\mathbb{A} = &-\imath \left[
                         \begin{array}{cc}
                           \Omega_-+\Omega_+ & 0 \\
                           0 & \Omega_--\Omega_+ \\
                         \end{array}
                       \right] \\
                       &-\frac{1}{2}\left[
                                                    \begin{array}{cc}
                                                     (C_--C_+)^\top (C_-+C_+) & 0 \\
                                                      0&(C_-+C_+)^\top  (C_--C_+)  \\
                                                    \end{array}
                                                  \right].
\end{aligned}\eeq 
Consequently, the system matrices all exhibit a block-diagonal structure, indicating that the dynamics of the position and momentum quadratures are completely decoupled. The resulting transfer function matrix $\mathbb{G}[s]$ therefore takes the block-diagonal form \eqref{Mar15-1}, which establishes the bilateral BAE measurements claimed in the proposition.
\end{proof}

\begin{remark}
Noticing that the system can be non-passive as $C_+$ and $\Omega_+$ can be nonzero in this case. The non-degenerate parametric amplifier (NDPA) (after rotation) in \cite{SY21}, the DPA, and the model studied in \cite{BQND24} (ignoring the perturbation term) belong to this special case.
\end{remark}

On the other hand, if $\mathcal{C}$ is purely imaginary in Proposition \ref{prop:BAE}, the transfer matrix in the real quadrature form is also block diagonal, which indicates that in this case the system position $\mbf{q}$ and momentum $\mbf{p}$ are separable. The following theorem summarizes the main results of the preceding two cases.

\begin{theorem}\label{lem:BAE1}
When $S$ is real, $\Omega$ is purely imaginary and $\mathcal{C}$ is real or purely imaginary, quantum BAE measurements of $\mbf{q}_{\rm out}$ with respect to $\mbf{p}_{\rm in}$ and $\mbf{p}_{\rm out}$ with respect to $\mbf{q}_{\rm in}$ are realized.
If further 
\begin{itemize}
\item $C_-=C_+$ ($\mbf{q}$ is coupled with the environment), then $\mbf{q}$ is a QND variable if the subsystem $(\imath(\Omega_-+\Omega_+),0,C_-)$ is observable;

\item $C_-=-C_+$ ($\mbf{p}$ is coupled with the environment), then $\mbf{p}$ is a QND variable if the subsystem $(\imath(\Omega_--\Omega_+),0,C_-)$ is observable.
\end{itemize}
\end{theorem}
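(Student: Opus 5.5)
The first claim, for real $\mathcal{C}$, is exactly Proposition \ref{prop:BAE}, so only the purely imaginary case needs work. I will write $C_\pm=\imath\tilde C_\pm$ with $\tilde C_\pm$ real and substitute into \eqref{eq:mar24_ABCD}. Then $\mathrm{Re}(C_-\pm C_+)=0$ and $\mathrm{Im}(C_-\pm C_+)=\tilde C_-\pm\tilde C_+$, so
\[
\mathbb{C}=\left[\begin{array}{cc} 0 & -(\tilde C_--\tilde C_+)\\ \tilde C_-+\tilde C_+ & 0\end{array}\right].
\]
The matrix is anti-block-diagonal. Likewise $\mathbb{B}$ is anti-block-diagonal, since $\mathbb{D}=\mathrm{diag}(S,S)$ for real $S$. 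In \eqref{eq:apr_C_sharp_C}, the terms $C_-^\dagger C_\pm$ and $C_+^\top C_-^\#$ become real products, which are block-diagonal in the $(\mathbf{q},\mathbf{p})$ splitting.

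This anti-diagonal structure does not break BAE. With $\mathbb{A}$ block-diagonal and $\mathbb{B},\mathbb{C}$ anti-block-diagonal, I compute $\mathbb{C}(sI-\mathbb{A})^{-1}\mathbb{B}$. The product is anti-diag $\cdot$ diag $\cdot$ anti-diag, which is block-diagonal. Adding the block-diagonal $\mathbb{D}$ keeps $\mathbb{G}[s]$ block-diagonal, as in \eqref{Mar15-1}. Equivalently, $\mathbf{p}_{\rm in}$ drives only $\mathbf{q}$, and $\mathbf{q}$ feeds only $\mathbf{p}_{\rm out}$. The step needing the most care is checking that the diagonal blocks of $\mathbb{C}^\sharp\mathbb{C}$ really survive and the off-diagonal blocks vanish; the $\sharp$-adjoint conventions make sign errors easy here. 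I will verify this directly from \eqref{eq:apr_C_sharp_C}, where every surviving term has the form $\mathrm{Re}$ of a real matrix.

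For the QND part, take the real case with $C_-=C_+$. Then $\mathbb{C}_p=0$ and $\mathbb{C}_q=2C_-$, so $\mathbb{C}=\mathrm{diag}(2C_-,0)$ up to the normalisation of $V$. The matrix $\mathbb{B}$ has only the lower-right block $-\mathbb{C}_q^\top S$ nonzero, and $\mathbb{C}^\sharp\mathbb{C}=0$ because both diagonal blocks contain the factor $\mathbb{C}_p$. Hence
\[
\dot{\mathbf{q}}=-\imath(\Omega_-+\Omega_+)\mathbf{q},\qquad \dot{\mathbf{p}}=-\imath(\Omega_--\Omega_+)\mathbf{p}-\mathbb{C}_q^\top S\,\mathbf{p}_{\rm in}.
\]
The $\mathbf{q}$-equation is closed and free of all input noise: the back-action enters only $\mathbf{p}$. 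The output $\mathbf{q}_{\rm out}=\mathbb{C}_q\mathbf{q}+S\mathbf{q}_{\rm in}$ carries information only about $\mathbf{q}$.

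To call $\mathbf{q}$ a QND variable in the paper's sense, it must also be fully measured. This is exactly observability of the subsystem $(\imath(\Omega_-+\Omega_+),0,C_-)$, which I read as the triple (state matrix, input matrix, output matrix) of the $\mathbf{q}$-dynamics. Observability ensures that the whole vector $\mathbf{q}$, not just part of it, is inferable from the record without demolition. The case $C_-=-C_+$ is symmetric: now $\mathbb{C}_q=0$, the roles of $\mathbf{q}$ and $\mathbf{p}$ swap, and the observability of $(\imath(\Omega_--\Omega_+),0,C_-)$ is required. If these QND conditions are meant under imaginary $\mathcal{C}$ as well, the same argument applies after the swap $\mathbf{q}\leftrightarrow\mathbf{p}$ forced by the anti-diagonal $\mathbb{C}$.
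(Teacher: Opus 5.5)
Your proof is correct and follows essentially the paper's route. A block-diagonal $\mathbb{A}$ together with anti-block-diagonal $\mathbb{B},\mathbb{C}$ (the off-diagonal blocks of \eqref{eq:apr_C_sharp_C} being imaginary parts that vanish) gives a block-diagonal $\mathbb{G}[s]$, and for the QND part you derive the closed, noise-free $\mathbf{q}$ (resp.\ $\mathbf{p}$) dynamics plus observability explicitly, where the paper only cites earlier work. One clarification on your last sentence: for purely imaginary $\mathcal{C}$ with $C_-=C_+$ the QND variable is still $\mathbf{q}$, since $\mathbf{L}\propto\mathbf{q}$. The anti-diagonal $\mathbb{C}$ only moves the readout of $\mathbf{q}$ to $\mathbf{p}_{\rm out}$ and the back-action input (which drives $\mathbf{p}$) to $\mathbf{q}_{\rm in}$.
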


\begin{proof}
The case where $\mathcal{C}$ is real has already been established in Proposition \ref{prop:BAE}. We now turn to the case where $\mathcal{C}$ is purely imaginary. In this case, the system matrices $\mathbb{C}$ and $\mathbb{B}$ can be described by the following block off-diagonal forms
\begin{equation*}\begin{aligned}
&\mathbb{C}=\left[\begin{array}{cc}
0 & -\mathrm{Im}(C_--C_+) \\
\mathrm{Im}(C_-+C_+) & 0
\end{array}\right], \\
&\mathbb{B}=\left[\begin{array}{cc}
0 & \mathrm{Im}(C_-^\dagger-C_+^\dagger) \\
-\mathrm{Im}(C_-^\dagger+C_+^\dagger) & 0
\end{array}\right].
\end{aligned}\end{equation*}
Then
\begin{equation*}\begin{aligned}
\mathbb{C}^\sharp\mathbb{C}=
\left[\begin{array}{cc}
(C_-^\dagger - C_+^\dagger)(C_- + C_+) & 0 \\
0 & (C_-^\dagger + C_+^\dagger)(C_- - C_+)
\end{array}\right],
\end{aligned}\end{equation*} 
which means that the system matrix $\mathbb{A}$ is block diagonal. Consequently, the transfer matrix is also block diagonal. Specifically, the dynamical evolution of this system is expressed by
\begin{equation}\label{Jan9-1}\begin{aligned}
\dot{\mbf{q}} =& \left(-\imath(\Omega_-+\Omega_+)-
\frac{1}{2}(C_- - C_+)^\dagger(C_-+C_+)\right)\mbf{q} \\
&+ \mathrm{Im}(C_--C_+)^\dagger\mbf{p}_{\rm in}, \\
\dot{\mbf{p}} =& \left(-\imath(\Omega_--\Omega_+)-
\frac{1}{2}(C_- + C_+)^\dagger(C_--C_+)\right)
\mbf{p} \\
&-\mathrm{Im}(C_-+C_+)^\dagger\mbf{q}_{\rm in}, \\
\mbf{q}_{\mathrm{out}}=&\; -\mathrm{Im}(C_--C_+)\mbf{p} +\mbf{q}_{\rm in}, \\
\mbf{p}_{\mathrm{out}} =&\; \mathrm{Im}(C_-+C_+)\mbf{q} + \mbf{p}_{\rm in},
\end{aligned}\end{equation}
where quantum BAE measurements of $\mbf{q}_{\rm out}$ with respect to $\mbf{p}_{\rm in}$ and $\mbf{p}_{\rm out}$ with respect to $\mbf{q}_{\rm in}$ are realized \cite[Theorem 4.1]{ZPL20}. When $C_-=C_+$, $\mbf{q}$ is a QND variable if the subsystem $(\imath(\Omega_-+\Omega_+),0,C_-)$ is observable; when $C_-=-C_+$, $\mbf{p}$ is a QND variable if the subsystem $(\imath(\Omega_--\Omega_+),0,C_-)$ is observable \cite[Remark 4.9]{ZGPG18}.
\end{proof}

\begin{remark}
In essence, Theorem \ref{lem:BAE1} establishes that a purely oscillatory internal dynamics (a purely imaginary Hamiltonian matrix) combined with a matched, real or imaginary system-probe coupling is sufficient to decouple the system's response to back-action noise, enabling bilateral evasion.    
\end{remark}

Secondly, we consider the scenario in which the scattering matrix $S$ is purely imaginary. The following theorem states that, under analogous structural conditions on $\mathcal{C}$ and $\Omega$, bilateral BAE measurements can similarly be realized.

\begin{theorem}\label{lem:BAE2}
When both $S$ and $\Omega$ are purely imaginary, $\mathcal{C}$ is real or purely imaginary, quantum BAE measurements of  $\mbf{q}_{\rm out}$ with respect to $\mbf{q}_{\rm in}$ and $\mbf{p}_{\rm out}$ with respect to $\mbf{p}_{\rm in}$  are realized.
If further 
\begin{itemize}
\item $C_-=C_+$ ($\mbf{q}$ is coupled with the environment), then $\mbf{q}$ is a QND variable if the subsystem $(\imath(\Omega_-+\Omega_+),0,C_-)$ is observable;
\item $C_-=-C_+$ ($\mbf{p}$ is coupled with the environment), then $\mbf{p}$ is a QND variable if the subsystem $(\imath(\Omega_--\Omega_+),0,C_-)$ is observable.
\end{itemize}
\end{theorem}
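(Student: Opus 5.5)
The plan is to reduce Theorem \ref{lem:BAE2} to the analysis already carried out for Theorem \ref{lem:BAE1} and Proposition \ref{prop:BAE}. The only change is in the scattering matrix, which affects $\mathbb{D}$ and $\mathbb{B}=(\cdots)\mathbb{D}$. It does not affect $\mathbb{A}$ or $\mathbb{C}$, since $\mathcal{A}=-\imath J_n\Omega-\frac12\mathcal{C}^\flat\mathcal{C}$ contains no $S$.

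\emph{Step 1: $\mathbb{D}$ becomes block off-diagonal.} Write $S=\imath S_I$ with $S_I$ real. Then Eq. \eqref{eq:mar24_ABCD} gives
\[
\mathbb{D}=\left[\begin{array}{cc} 0 & -S_I\\ S_I & 0\end{array}\right].
\]
So $\mathbb{D}$ swaps the $\mbf{q}$- and $\mbf{p}$-channels, and $\mathbb{D}^{-1}=\mathbb{D}^\top$ is also block off-diagonal.

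\emph{Step 2: the transfer function.} The factor multiplying $\mathbb{D}$ in $\mathbb{B}$ depends only on $C_\pm$, and the proofs of Proposition \ref{prop:BAE} and Theorem \ref{lem:BAE1} show the following.
\begin{itemize}
\item If $\mathcal{C}$ is real, then $\mathbb{A}$, $\mathbb{C}$ and that factor are all block diagonal. Hence $\mathbb{B}$ is block off-diagonal.
\item If $\mathcal{C}$ is purely imaginary, then $\mathbb{A}$ is block diagonal while $\mathbb{C}$ and the factor are block off-diagonal. Hence $\mathbb{B}$ is block diagonal.
\end{itemize}
Now form $\mathbb{G}[s]=\mathbb{C}(sI-\mathbb{A})^{-1}\mathbb{B}+\mathbb{D}$. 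Since $(sI-\mathbb{A})^{-1}$ is block diagonal, parity counting gives, in the real case, (diagonal)(diagonal)(off-diagonal) plus off-diagonal, which is off-diagonal. In the imaginary case it gives (off-diagonal)(diagonal)(diagonal) plus off-diagonal, again off-diagonal. So
\[
\mathbb{G}[s]=\left[\begin{array}{cc}0&\mathbb{G}_{qp}[s]\\ \mathbb{G}_{pq}[s]&0\end{array}\right],
\]
that is, $\mathbb{G}_{\mbf{q}_{\rm in}\to\mbf{q}_{\rm out}}=0$ and $\mathbb{G}_{\mbf{p}_{\rm in}\to\mbf{p}_{\rm out}}=0$, which are exactly the claimed BAE measurements. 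One could also get this shortcut-style by factoring $\mathbb{G}[s]=\mathbb{G}_{S=I}[s]\,\mathbb{D}$, but since $\mathbb{B}$ carries the trailing factor $\mathbb{D}$, the parity count is cleaner and I would present that.

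\emph{Step 3: the QND statements.} These concern only the state dynamics and the output map. When $C_-=\pm C_+$, one of $\mathbb{C}_q,\mathbb{C}_p$ vanishes, so $\mathbb{A}$ is block triangular. If $\mathcal{C}$ is real, $\mathbb{A}$ is in fact block diagonal with the $\mbf{q}$-block (resp. $\mbf{p}$-block) equal to $-\imath(\Omega_-+\Omega_+)$ (resp. $-\imath(\Omega_--\Omega_+)$). The corresponding part of $\mathbb{B}$ vanishes whatever $S$ is, because $\mathbb{B}$ is a fixed $C$-dependent matrix times the invertible $\mathbb{D}$. So for $C_-=C_+$, $\mbf{q}$ evolves autonomously under $-\imath(\Omega_-+\Omega_+)$, is not driven by any input, and appears in the output through $2C_-$. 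Invoking the QND criterion \cite[Remark 4.9]{ZGPG18} exactly as in Theorem \ref{lem:BAE1}, observability of $(\imath(\Omega_-+\Omega_+),0,C_-)$ then makes $\mbf{q}$ a QND variable. The case $C_-=-C_+$ is symmetric.

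The main obstacle I expect is bookkeeping rather than conceptual. I need to check that, with $S$ imaginary, the zero input block of $\mbf{q}$ is still zero. It is, because $S$ enters $\mathbb{B}$ only as a right factor, so it merely permutes which input quadrature would have driven $\mbf{q}$. I also need to confirm that the purely imaginary $\mathcal{C}$ case with $C_-=\pm C_+$ gives the same autonomous subsystem (up to the $\mathrm{Im}$ conventions of Eq. \eqref{Jan9-1}).
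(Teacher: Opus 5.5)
Your proposal is correct and matches the paper's proof. A purely imaginary $S$ makes $\mathbb{D}$ block off-diagonal, and it flips the block parity of $\mathbb{B}$ because $\mathbb{D}$ enters as a right factor. Since $\mathbb{A}$ stays block diagonal, $\mathbb{G}[s]$ is block off-diagonal for both real and purely imaginary $\mathcal{C}$, and the QND claims carry over from Theorem~\ref{lem:BAE1}. Your extra remarks, the factorization $\mathbb{G}[s]=\mathbb{G}_{S=I}[s]\,\mathbb{D}$ and the fact that the zero $\mathbf{q}$-rows of $\mathbb{B}$ stay zero under right multiplication by $\mathbb{D}$, are valid. They spell out what the paper only asserts ``exactly as in Theorem~\ref{lem:BAE1}''.
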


\begin{proof}
Since $S$ is purely imaginary, the system matrix $\mathbb{D}$ is block off-diagonal. We first assume that $\mathcal{C}$ is real.  
In this case,
\begin{equation*}
\mathbb{C} = \left[\begin{array}{cc}
C_-+C_+ & 0 \\
0 & C_--C_+
\end{array}\right], \ \ \
\mathbb{B} = \left[\begin{array}{cc}
0 & -(C_--C_+)^\top \\
(C_-+C_+)^\top & 0
\end{array}\right].    
\end{equation*}
Together with the block-diagonal form \eqref{eqLmar34_A5} of the system matrix $\mathbb{A}$ implied by the purely imaginary $\Omega$, the system matrices are all block off-diagonal or block diagonal, 
so that the resulting transfer matrix $\mathbb{G}[s]$ 
is also block off-diagonal. Consequently, the cross-coupling terms 
$\mathbb{G}_{\mathbf{q}_{\rm in}\to\mathbf{q}_{\rm out}}[s]$ and 
$\mathbb{G}_{\mathbf{p}_{\rm in}\to\mathbf{p}_{\rm out}}[s]$ vanish, 
which establishes the bilateral BAE measurements for the quadrature pairs 
$(\mathbf{q}_{\rm in},\mathbf{q}_{\rm out})$ and $(\mathbf{p}_{\rm in},\mathbf{p}_{\rm out})$. An analogous argument applies when $\mathcal{C}$ is purely imaginary, leading to the same block off-diagonal structure of the transfer matrix $\mathbb{G}[s]$ and hence to the same BAE measurements.

The QND properties for observables $\mathbf{q}$ and $\mathbf{p}$ follow by combining the subsystem conditions with the block-diagonal dynamics, exactly as in the discussion given by Theorem \ref{lem:BAE1}.     
\end{proof}


\subsection{Unilateral quantum BAE measurement}\label{sec:unilat}

In this subsection, we relax the requirement that $\Omega$ be purely imaginary and instead assume that the real parts of $\Omega_-$ and $\Omega_+$ are either equal or opposite in sign. Under these weaker conditions the transfer matrix becomes block‑triangular, which yields \emph{unilateral} BAE measurements.  
We also treat the important special case in which the coupling matrix $\mathcal{C}$ is purely imaginary and satisfies $C_-=\pm C_+$, i.e., $\mathbf{q}$‑ or $\mathbf{p}$‑coupling, for which the BAE condition holds irrespective of $\Omega$.

\subsubsection{\texorpdfstring{Dependence on the real parts of $\Omega$}{Dependence on the real parts of Omega}}\label{sec:unilat-real}

When the real parts satisfy $\mathrm{Re}(\Omega_-)=\pm\mathrm{Re}(\Omega_+)$, the matrix $\mathbb{J}_n\mathbb{H}$ becomes block‑triangular, and therefore so does the system matrix $\mathbb{A}$.  
As a consequence, the transfer matrix $\mathbb{G}[s]$ inherits one of four possible block‑triangular structures, depending on whether $S$ and $\mathcal{C}$ are real or purely imaginary.  
Tables \ref{tab:same-real} and \ref{tab:opp-real} summarize the resulting unilateral BAE relations for the two sign choices.

\begin{table}[htb]
\centering
\caption{Unilateral BAE conditions when $\mathrm{Re}(\Omega_-) = \mathrm{Re}(\Omega_+)$.}
\label{tab:same-real}
\begin{tabular}{c|c|c}
\toprule
$S$ & $\mathcal{C}$ & Vanishing transfer block (BAE) \\
\midrule
real           & real           & $\mathbb{G}_{\mathbf{p}_{\rm in}\to\mathbf{q}_{\rm out}}[s]=0$ \\[2pt]
real           & purely imag.   & $\mathbb{G}_{\mathbf{q}_{\rm in}\to\mathbf{p}_{\rm out}}[s]=0$ \\[2pt]
purely imag.   & real           & $\mathbb{G}_{\mathbf{q}_{\rm in}\to\mathbf{q}_{\rm out}}[s]=0$ \\[2pt]
purely imag.   & purely imag.   & $\mathbb{G}_{\mathbf{p}_{\rm in}\to\mathbf{p}_{\rm out}}[s]=0$ \\
\bottomrule
\end{tabular}
\end{table}

\begin{table}[htb]
\centering
\caption{Unilateral BAE conditions when $\mathrm{Re}(\Omega_-) = -\mathrm{Re}(\Omega_+)$.}
\label{tab:opp-real}
\begin{tabular}{c|c|c}
\toprule
$S$ & $\mathcal{C}$ & Vanishing transfer block (BAE) \\
\midrule
real           & real           & $\mathbb{G}_{\mathbf{q}_{\rm in}\to\mathbf{p}_{\rm out}}[s]=0$ \\[2pt]
real           & purely imag.   & $\mathbb{G}_{\mathbf{p}_{\rm in}\to\mathbf{q}_{\rm out}}[s]=0$ \\[2pt]
purely imag.   & real           & $\mathbb{G}_{\mathbf{p}_{\rm in}\to\mathbf{p}_{\rm out}}[s]=0$ \\[2pt]
purely imag.   & purely imag.   & $\mathbb{G}_{\mathbf{q}_{\rm in}\to\mathbf{q}_{\rm out}}[s]=0$ \\
\bottomrule
\end{tabular}
\end{table}

The derivations of these blocking patterns follow exactly the same steps as in Section \ref{sec:bilat} and are omitted for brevity.

\subsubsection{\texorpdfstring{The special case of $\mathbf{q}$‑ or $\mathbf{p}$‑coupling}{The special case of q‑ or p‑coupling}}\label{sec:unilat-qp}

We now drop all constraints on $\Omega$ and consider instead the coupling matrix $\mathcal{C}$ that is purely imaginary and satisfies $C_-=\pm C_+$ (i.e., only the $\mathbf{q}$ or the $\mathbf{p}$ quadratures are coupled to the probe). The following corollary states the resulting BAE relations, its proof is a straightforward consequence of the block‑triangular structure of the transfer matrix and is therefore omitted.

\begin{corollary}\label{cor:BAE}
Let $S$ be real and $\mathcal{C}$ be purely imaginary.
\begin{itemize}
  \item If $C_- = C_+$ ($\mathbf{q}$‑coupling), quantum BAE measurement of $\mathbf{q}_{\rm out}$ with respect to $\mathbf{p}_{\rm in}$ is realized.
  \item If $C_- = -C_+$ ($\mathbf{p}$‑coupling), quantum BAE measurement of $\mathbf{p}_{\rm out}$ with respect to $\mathbf{q}_{\rm in}$ is realized.
\end{itemize}
\end{corollary}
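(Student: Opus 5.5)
The plan is to work directly in the real quadrature representation \eqref{system:qua} and read off the relevant block of $\mathbb{G}[s]=\mathbb{C}(sI-\mathbb{A})^{-1}\mathbb{B}+\mathbb{D}$. No structure on $\mathbb{A}$ (hence on $\Omega$) should be needed. The aim is to show that the relevant input column of $\mathbb{B}$ and the relevant output row of $\mathbb{C}$ are, respectively, zero or such that the two quadrature channels never meet through the state.

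\emph{Step 1 (feedthrough).} Since $S$ is real, $\mathbb{D}=\mathrm{diag}(S,S)$ by \eqref{eq:mar24_ABCD}. Hence $\mathbb{D}$ contributes nothing to the off-diagonal blocks $\mathbb{G}_{\mathbf{p}_{\rm in}\to\mathbf{q}_{\rm out}}$ and $\mathbb{G}_{\mathbf{q}_{\rm in}\to\mathbf{p}_{\rm out}}$.

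\emph{Step 2 (the $\mathbf{q}$-coupling case).} Write $C_-=C_+=\imath K$ with $K$ real. From the purely imaginary form of $\mathbb{C}$ in the proof of Theorem \ref{lem:BAE1}, we get $\mathbb{C}=\left[\begin{smallmatrix}0&0\\ 2K&0\end{smallmatrix}\right]$. Thus $\mathbf{q}_{\rm out}=S\mathbf{q}_{\rm in}$ involves no state at all.

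Next compute $\mathbb{B}$. We have $C_-^\dagger-C_+^\dagger=0$ and $C_-^\dagger+C_+^\dagger=-2\imath K^\top$, so \eqref{eq:mar24_ABCD} gives $\mathbb{B}=\left[\begin{smallmatrix}0&0\\ 2K^\top S&0\end{smallmatrix}\right]$. So $\mathbf{p}_{\rm in}$ does not drive the state either.

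It follows that the $(1,2)$ block of $\mathbb{C}(sI-\mathbb{A})^{-1}\mathbb{B}$ vanishes for any $\mathbb{A}$, because the first block row of $\mathbb{C}$ is zero (equivalently, the second block column of $\mathbb{B}$ is zero). Together with Step 1, this gives $\mathbb{G}_{\mathbf{p}_{\rm in}\to\mathbf{q}_{\rm out}}[s]=0$.

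\emph{Step 3 (the $\mathbf{p}$-coupling case).} Symmetrically, take $C_-=-C_+=\imath K$. Then $\mathbb{C}=\left[\begin{smallmatrix}0&-2K\\ 0&0\end{smallmatrix}\right]$, so $\mathbf{p}_{\rm out}=S\mathbf{p}_{\rm in}$. Also $C_-^\dagger+C_+^\dagger=0$, so $\mathbb{B}$ has zero first block column and $\mathbf{q}_{\rm in}$ never enters the state. Hence $\mathbb{G}_{\mathbf{q}_{\rm in}\to\mathbf{p}_{\rm out}}[s]=0$, again irrespective of $\Omega$.

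The only real obstacle is bookkeeping. The signs and transposes in the real and imaginary parts of $C_\pm^\dagger$ in \eqref{eq:mar24_ABCD}, and the order of multiplication by $\mathbb{D}$, must be tracked carefully. It is worth noting, however, that the conclusion needs only one of the two vanishing facts: zero output row or zero input column. A sign slip in either computation would therefore not affect the argument, provided the other is verified.
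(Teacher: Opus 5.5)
Your proposal is correct and is essentially the argument the paper intends: the paper omits the proof as a ``straightforward consequence of the block-triangular structure of the transfer matrix,'' and your observation that, for purely imaginary $C_-=\pm C_+$ and real $S$, $\mathbb{C}$ has a zero block row and $\mathbb{B}$ a zero block column (with $\mathbb{D}=\mathrm{diag}(S,S)$) is exactly what produces that triangular structure for arbitrary $\Omega$. One harmless slip: in Step 3 the vanishing $(1,1)$ block of $\mathbb{B}$ comes from $\mathrm{Re}(C_-^\dagger-C_+^\dagger)=\mathrm{Re}(-2\imath K^\top)=0$ rather than from $C_-^\dagger+C_+^\dagger=0$, but, as you note, the zero second block row of $\mathbb{C}$ already suffices on its own.
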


The following example, taken from \cite[Fig. 3(c)]{NY14} and \cite[Eq. (35)]{ZPL20}, illustrates Corollary \ref{cor:BAE}.

\begin{example}[Michelson interferometer]
Consider a Michelson interferometer, one of the simplest devices for gravitational‑wave detection, with system Hamiltonian
\begin{equation*}\begin{aligned}
\Omega_-=\frac{1}{2}\left[\begin{array}{cc}
m \omega_m^2+\frac{1}{m} & 0 \\
0 & m \omega_m^2+\frac{1}{m}
\end{array}\right], \\ 
\Omega_+=\frac{1}{2}\left[\begin{array}{cc}
m \omega_m^2-\frac{1}{m} & 0 \\
0 & m \omega_m^2-\frac{1}{m}
\end{array}\right],
\end{aligned}\end{equation*}
where $m$ and $\omega_m$ denote the mass of the mechanical
oscillators and the resonant frequency, respectively. The coupling operator matrix $\mathcal{C}$ is purely imaginary with 
\begin{equation*}
C_-=C_+=\frac{\imath\sqrt{\lambda}}{2}\left[\begin{array}{cc}
1 & 1 \\
1 & -1
\end{array}\right],    
\end{equation*}
where $\lambda$ is the coupling strength between the input field and the mechanical oscillators. The scattering matrix $S$ is chosen to be identity, i.e., $S= I$. As a result, the transfer function of the Michelson interferometer $\mathbb{G}[s]$ can be calculated as a block lower triangular matrix, which confirms that quantum BAE measurement of $\mbf{q}_{\rm out}$ with respect to $\mbf{p}_{\rm in}$ is realized.   
\end{example}


\section{QND interaction}\label{QND INT}

A QND interaction enables the repeated measurement of a quantum observable without perturbing its value or its eigenstate distribution. This is achieved by designing the system–probe coupling such that the measurement operator commutes with the system Hamiltonian. For example, consider a set of mutually commuting self-adjoint coupling operators $\mbf{L} = [\mbf{L}_1, \ldots , \mbf{L}_m]^\top$, which can be used as measurement operators for a continuous-time projection measurement process. $[\mbf{L},\mbf{H}]=0$ is assumed in the so-called QND interaction \cite{BK95,QPG13}, \cite[Sec. 4.2.4]{NY17}, \cite{GZP19,CSR+20}, and thus the time evolution of the measurement operator $\mbf{L}$ can be calculated by Eq. \eqref{dX} ($S=I$)
\begin{equation}\label{eq:L_sept16}
d\mbf{L}_j= (-\imath [\mbf{L}_j, \mbf{H}]+\frac{1}{2}([\mbf{L}^\dagger,\mbf{L}_j]\mbf{L})dt+
[\mbf{L}^\dagger,\mbf{L}_j]d\mbf{B}_{\rm in}=0,
\end{equation}
which means that the observable $\mbf{L}_j$ remains as its initial value and is independent of the measurement process.  Moreover the transfer matrix is the identity matrix, therefore quantum BAE measurements are realized.

\bmrk
According to Eq. \eqref{eq:L_sept16}, $\mbf{L}$ is set of mutually commuting observables, then they are QND variables. Moreover, as the transfer function is an identity matrix, therefore, quantum BAE measurements are realized.  Thus, if $\mbf{L}$ is a set of mutually commuting observables and $[\mbf{L},\mbf{H}]=0$, then QND and BAE are realized simultaneously. 
\emrk

Belavkin quantum filtering equation, which is a classical stochastic differential equation, is given by
\begin{equation}\label{dec19_1}\begin{aligned}
 d\pi_t(\mbf{L}_j)
=&  \pi_t(\mathcal{L}(\mbf{L}_j))dt + \Big[\pi_t(\mbf{L}_j\mbf{L}^\top+\mbf{L}^\dagger \mbf{L}_j) \\
&-\pi_t(\mbf{L}^\top+\mbf{L}^\dagger)\pi_t(\mbf{L}_j)\Big]d\nu(t),
\end{aligned}\end{equation}
where $\nu(t)$ is Wiener process \cite[Sec. 6.1]{BvHJ07} and the superoperator 
\begin{equation*}
\mathcal{L}(\mbf{L}_j)=-\imath[\mbf{L}_j,\mbf{H}]+\frac{1}{2}\mbf{L}^\dagger[\mbf{L}_j,\mbf{L}]+\frac{1}{2}[\mbf{L}^\dagger,\mbf{L}_j]\mbf{L} =0.    
\end{equation*}
Here, we assume the initial conditional state has the finite first and second moments, $\pi_0(\mbf{L}_j) = \mathrm{Tr}(\rho_{\rm S}(0)\mbf{L}_j)$, where $\rho_{\rm S}(0)$ is the initial system density state. Thus, we have $d\pi_t(\mbf{L}_j)
=  [\pi_t(\mbf{L}_j\mbf{L}^\top+\mbf{L}^\dag \mbf{L}_j) -\pi_t(\mbf{L}_j)\pi_t(\mbf{L}^\top+\mbf{L}^\dag)]d\nu(t)$. Taking expectation with respect to the Wiener process $\nu(t)$, yields that  $\mathbb{E}[ d\pi_t(\mbf{L}_j)]=0$. As a result,
\begin{equation}\label{eq:sept18_QND}
\mathbb{E}[\pi_t(\mbf{L}_j)] = \mathbb{E}[\pi_0(\mbf{L}_j)]  = {\rm Tr}(\rho_S(0)\mbf{L}_j).    
\end{equation}
Moreover, $\mathbb{E}[\pi_t(\mbf{L}_j^2)] = \mathbb{E}[\pi_0(\mbf{L}_j^2)]  = {\rm Tr}(\rho_S(0)\mbf{L}_j^2)$.

\bmrk
According to Eq. \eqref{eq:sept18_QND}, under the QND interaction, measurements have no back action on the estimation of $\mbf{L}_j$. 
\emrk

In the following SISO case, we go to the Schr{\"o}dinger picture. Let $\mbf{L}$ be a (possibly unbounded) self-adjoint operator,
and let $P_{\mbf{L}}(\cdot)$ denote its spectral measure. Thus, by von Neumann's spectral theorem, we have
\begin{equation*}
  \mbf{L} = \int_{\mathbf{R}} \lambda \, dP_{\mbf{L}}(\lambda).
\end{equation*}
For any Borel set $\mathcal{E}\subset\mathbf{R}$, define the spectral probability
\begin{equation*}
  \mu_t(\mathcal{E}) \triangleq \mathrm{Tr}\!\left[\rho_c(t)\,P_{\mbf{L}}(\mathcal{E})\right].
\end{equation*}
Clearly, $\mu_t(\mathcal{E})\in[0,1]$, $\mu_t(\mathbf{R})=1$, and $\mathcal{E}\mapsto \mu_t(\mathcal{E})$ is a
probability measure for each fixed $t$.
Under the assumption $[\mbf{L},\mbf{H}]=0$,  $P_{\mbf{L}}(\mathcal{E})$ is preserved by the Heisenberg dynamics. By applying QSME to the bounded projector $P_{\mbf{L}}(\mathcal{E})$ yields a following stochastic differential equation:
\begin{equation*}
  d\mu_t(\mathcal{E}) = {\rm Tr}\big[\mathcal{G}_{\mathbf{L}}\big(\rho_c(t)\big)P_{\mbf{L}}(\mathcal{E}) \big] d\nu(t),
\end{equation*}
where $\mathcal{G}_{\mathbf{L}}(\rho)$ is the diffusion term of QSME \eqref{Jan12-2}.
In particular,  for each fixed $\mathcal{E}$, $\mu_t(\mathcal{E})$ is a bounded martingale, and there exist random variables $\mu_{\infty}(\mathcal{E})\in[0,1]$ such that
\begin{equation*}
    \lim_{t\rightarrow \infty}\mu_t(\mathcal{E})=\mu_{\infty}(\mathcal{E})
\end{equation*}
almost surely and in $L^1$ norm. Moreover
\begin{equation*}
 \mathbb{E}\left[\mu_t(\mathcal{E})\right] =\mu_0(\mathcal{E}),\quad \forall\, t\ge 0.
\end{equation*}
Moreover, $\mu_\infty(\mathcal E)$ is $\{0,1\}$-valued provided that different spectral
values of ${\mbf{L}}$ induce distinguishable long-time measurement statistics.
If ${\mbf{L}}=\sum_{j}\lambda_j P_j$ has a discrete spectrum with pairwise distinct eigenvalues
$\lambda_j\neq\lambda_k$ for $j\neq k$,
we have 
\begin{equation*}
    \mathbb{P}\big(\mu_{\infty}(\lambda_j)=1 \big)=\mathrm{Tr}(\rho_c(0) P_j), \quad \mathbb{P}\big(\mu_{\infty}(\lambda_j)\mu_{\infty}(\lambda_k)=0 \big)=1.
\end{equation*}
It means the conditional state $\rho_c(t)$ converges to one of eigenspaces of ${\mbf{L}}$ almost surely, which is referred to \emph{quantum state reduction} \cite{TC2014}, \cite[Thm. 5.1]{LAM19}, \cite[Lem. 1]{CSR+20}.

\subsection{The SISO case}\label{siso}

In this subsection, we analyze the SISO case, where the BAE condition reduces to two simple commutator relations, namely $[\mathbf{L}+\mathbf{L}^\ast,\mathbf{H}]=0$ or $[\mathbf{L}-\mathbf{L}^\ast,\mathbf{H}]=0$. We formalize this result as a theorem, whose direct proof  is given below.

\begin{theorem}\label{thm:BAE_SISO}
Quantum BAE measurement in the SISO case can be realized if 
\[
[\mathbf{L}+\mathbf{L}^\ast,\mathbf{H}]=0 \quad\text{or}\quad [\mathbf{L}-\mathbf{L}^\ast,\mathbf{H}]=0.
\]
\end{theorem}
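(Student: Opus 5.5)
The plan is to avoid the quadrature matrices $(\mathbb{A},\mathbb{B},\mathbb{C},\mathbb{D})$ altogether and work directly in the Heisenberg picture with the QSDE \eqref{dX} and the input-output relation \eqref{eq:io}. In the SISO case we have $n=m=1$, $\mbf{L}=c_-\mbf{a}+c_+\mbf{a}^\ast$ and $\mbf{H}=\frac12\breve{\mbf{a}}^\dagger\Omega\breve{\mbf{a}}$. I take $S=1$; a real $S=\pm1$ is handled identically, and a general phase $S=e^{\imath\theta}$ can first be absorbed into a rotation of the input quadratures.

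The key objects are the two self-adjoint linear observables
\[
\mbf{X}\triangleq \mbf{L}+\mbf{L}^\ast,\qquad \mbf{Y}\triangleq -\imath(\mbf{L}-\mbf{L}^\ast).
\]
The corresponding output quadratures are $d\mbf{B}_{\rm out}+d\mbf{B}_{\rm out}^\ast$ and $-\imath(d\mbf{B}_{\rm out}-d\mbf{B}_{\rm out}^\ast)$. By \eqref{eq:io} these are $\mbf{X}\,dt+(d\mbf{B}_{\rm in}+d\mbf{B}_{\rm in}^\ast)$ and $\mbf{Y}\,dt-\imath(d\mbf{B}_{\rm in}-d\mbf{B}_{\rm in}^\ast)$, respectively.

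Consider first the case $[\mbf{X},\mbf{H}]=0$. I would compute $d\mbf{X}$ from \eqref{dX}. Since $\mbf{L}$ is linear in $\mbf{a},\mbf{a}^\ast$, the quantity $\kappa\triangleq[\mbf{L}^\ast,\mbf{L}]=|c_-|^2-|c_+|^2$ is a real scalar. Hence $[\mbf{X},\mbf{L}]=[\mbf{L}^\ast,\mbf{L}]=\kappa$ and $[\mbf{L}^\ast,\mbf{X}]=\kappa$, both multiples of the identity. The Hamiltonian term vanishes by assumption. The Lindblad part then collapses to $\frac12\mbf{L}^\ast\kappa+\frac12\kappa\mbf{L}=\frac{\kappa}{2}\mbf{X}$, and the noise part becomes $\kappa(d\mbf{B}_{\rm in}+d\mbf{B}_{\rm in}^\ast)$. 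Thus, up to sign conventions I would double-check,
\[
d\mbf{X}=\tfrac{\kappa}{2}\mbf{X}\,dt+\kappa\,(d\mbf{B}_{\rm in}+d\mbf{B}_{\rm in}^\ast).
\]
This is a closed scalar equation driven only by the amplitude input quadrature. Combining it with the output relation gives the transfer function from the amplitude input quadrature to the amplitude output quadrature explicitly as $1+\kappa/(s-\kappa/2)$. The transfer function from the conjugate input quadrature $-\imath(d\mbf{B}_{\rm in}-d\mbf{B}_{\rm in}^\ast)$ to this output is identically zero. That is exactly the BAE condition $\mathbb{G}_{\boldsymbol{u}_p\to\boldsymbol{y}_q}[s]=0$, expressed in the rotated quadrature frame determined by the phase of $\mbf{L}$.

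The case $[\mbf{L}-\mbf{L}^\ast,\mbf{H}]=0$ is symmetric. Here $[\mbf{Y},\mbf{H}]=0$, and the same computation (with $[\mbf{Y},\mbf{L}]=-\imath\kappa$ and $[\mbf{L}^\ast,\mbf{Y}]=\imath\kappa$) yields a closed equation for $\mbf{Y}$ driven only by $-\imath(d\mbf{B}_{\rm in}-d\mbf{B}_{\rm in}^\ast)$. Consequently, the phase output is insensitive to the amplitude input.

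The main obstacle is bookkeeping rather than ideas. First, one must keep track of signs and factors of $\sqrt2$ when identifying $\mbf{X},\mbf{Y}$ with $\mbf{q}_{\rm out},\mbf{p}_{\rm out}$ in the convention of \eqref{system:qua}. Second, one must justify that the statement is frame-covariant: when $\mbf{L}$ is not already real or imaginary, the relevant quadratures are rotated ones, and one has to check that this rotation is consistent with Corollary \ref{cor:BAE}. Two edge cases also need a check. If $\kappa=0$, $\mbf{X}$ is constant and the transfer function is simply $1$, so the argument still goes through. If $\mbf{X}$ is a scalar multiple of the identity (i.e.\ $\mbf{L}$ is purely imaginary up to phase), the condition is trivial, and I would then invoke the other commutator instead.
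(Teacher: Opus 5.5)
Your proposal follows essentially the same route as the paper: a closed scalar QSDE for $\mathbf{L}\pm\mathbf{L}^\ast$ (possible because $[\mathbf{L}^\ast,\mathbf{L}]$ is a c-number), driven by a single input quadrature and combined with the input--output relation, gives $\mathbf{q}_{\rm out}[s]=\frac{s-g/2}{s+g/2}\,\mathbf{q}_{\rm in}[s]$ with no $\mathbf{p}_{\rm in}$ term. Fix the bookkeeping, though: $[\mathbf{L}^\ast,\mathbf{L}]=\sum_j(|C_{+,j}|^2-|C_{-,j}|^2)=-g$ (and both of your $\mathbf{Y}$-commutators have flipped signs), which turns your equation into the paper's $d\mathbf{X}=-\frac{g}{2}\mathbf{X}\,dt-g\,(d\mathbf{B}_{\rm in}+d\mathbf{B}_{\rm in}^\ast)$; also nothing requires $n=1$, and no rotation tied to the phase of $\mathbf{L}$ is involved, since with $S=1$ the quantity $\mathbf{X}\,dt+d\mathbf{B}_{\rm in}+d\mathbf{B}_{\rm in}^\ast$ is exactly $\sqrt{2}\,d\mathbf{Q}_{\rm out}$ in the standard frame (so the degenerate case $\mathbf{L}+\mathbf{L}^\ast=0$ needs no fallback either: $g=0$ and $\mathbf{q}_{\rm out}=\mathbf{q}_{\rm in}$).
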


\begin{proof}
Since the number of input channels is $m=1$, it is straightforward to verify that $C_-C_+^\top = C_+C_-^\top$. We treat the two conditions separately.

\noindent\textit{Case 1:} $[\mathbf{L}+\mathbf{L}^\ast,\mathbf{H}]=0$.
From the definition $\mathbf{L}=C_-\mathbf{a}+C_+\mathbf{a}^\#$ and the elementary commutators
\begin{align*}
[C_-^\#\mathbf{a}^\#,C_-\mathbf{a}] &= -\sum_j |C_{-,j}|^2, \\
[C_+^\#\mathbf{a},C_+\mathbf{a}^\#] &= \sum_k |C_{+,k}|^2,
\end{align*}
a short calculation yields the stochastic differential equation
\begin{equation}\label{eq:sep12_1}
d(\mathbf{L}+\mathbf{L}^\ast) = -\frac{g}{2}(\mathbf{L}+\mathbf{L}^\ast)dt - \sqrt{2}\,g\,d\mathbf{Q}_{\mathrm{in}},
\end{equation}
with
\begin{equation}\label{sepg}
g \triangleq \sum_{j=1}^n (|C_{-,j}|^2-|C_{+,j}|^2),\qquad
\mathbf{Q}_{\mathrm{in}} = \frac{1}{\sqrt{2}}\bigl(\mathbf{B}_{\mathrm{in}}+\mathbf{B}_{\mathrm{in}}^\ast\bigr).   
\end{equation}
Eq. \eqref{eq:sep12_1} can be integrated explicitly,
\[
(\mathbf{L}+\mathbf{L}^\ast)(t) = e^{-gt/2}(\mathbf{L}+\mathbf{L}^\ast)(0) - \sqrt{2}\,g\int_0^t e^{-g(t-\tau)/2}d\mathbf{Q}_{\mathrm{in}}(\tau),
\]
and the input–output relation 
\begin{equation}\label{sep11-1}
\mathbf{q}_{\rm out}(t) = \frac{1}{\sqrt{2}}(\mathbf{L}+\mathbf{L}^\ast)(t) + \mathbf{q}_{\mathrm{in}}(t)    
\end{equation}
leads, after Laplace transformation, to
\begin{equation}\label{eq:sept16_q_out}
\mathbf{q}_{\rm out}[s] = \left[1-g\left(s+\frac{g}{2}\right)^{-1}\right]\mathbf{q}_{\mathrm{in}}[s]
= \frac{s-g/2}{s+g/2}\,\mathbf{q}_{\mathrm{in}}[s].    
\end{equation}
Hence the output quadrature $\mathbf{q}_{\rm out}$ is independent of the input quadrature $\mathbf{p}_{\mathrm{in}}$, which is precisely the BAE condition $\mathbb{G}_{\mathbf{p}_{\mathrm{in}}\to\mathbf{q}_{\mathrm{out}}}[s]=0$.

\noindent\textit{Case 2:} $[\mathbf{L}-\mathbf{L}^\ast,\mathbf{H}]=0$. A similar computation gives
\begin{equation}\label{Oct16-1}
d(\mathbf{L}-\mathbf{L}^\ast) = -\frac{g}{2}(\mathbf{L}-\mathbf{L}^\ast)dt - \imath\sqrt{2}\,g\,d\mathbf{P}_{\mathrm{in}},
\end{equation}
where $\mathbf{P}_{\mathrm{in}} = \frac{-\imath}{\sqrt{2}}(\mathbf{B}_{\mathrm{in}}-\mathbf{B}_{\mathrm{in}}^\ast)$.
Solving Eq. \eqref{Oct16-1}, yields that 
\begin{equation*}
(\mbf{L-L^\ast})(t) = e^{-gt/2}(\mbf{L-L^\ast})(0)-\imath\sqrt{2}g \int_0^t e^{-g(t-\tau)/2} d\mbf{P}_{\mathrm{in}}(\tau).    
\end{equation*}
We have
\beq\label{sep11-2}
\mbf{p}_{\rm out}(t) =\frac{-\imath}{\sqrt{2}}(\mbf{L-L^\ast})(t) + \mbf{p}_{\mathrm{in}}(t), 
\eeq
and thus
\begin{equation*}
\mbf{p}_{\rm out}[s] = [1-g (s+\frac{g}{2})^{-1}] \mbf{p}_{\mathrm{in}}[s] = \frac{s-g/2}{s+g/2}\mbf{p}_{\mathrm{in}}[s],     
\end{equation*}
where the output quadrature $\mbf{p}_{\rm out}$ is insensitive to the input quadrature $\mathbf{q}_{\mathrm{in}}$.

Both cases together establish the unilateral BAE measurements described in the theorem.
\end{proof}

\begin{remark}
If $\sum_j |C_{-,j}|^2=\sum_k |C_{+,k}|^2$, then $g=0$ in Eq. \eqref{sepg}, which yields that $d(\mbf{L}+\mbf{L}^\ast)=0$ and the coupling operator $(\mbf{L}+\mbf{L}^\ast)(t)=(\mbf{L}+\mbf{L}^\ast)(0)=\mbf{L}+\mbf{L}^\ast$ can be detected via the input-output relation \eqref{sep11-1}. Thus, this kind of interaction does not change the physical property of the coupling operator $\mbf{L}+\mbf{L}^\ast$,
which is called a QND interaction in \cite[Sec. 4.2.4]{NY17}. On the other hand, by Eq. \eqref{sep11-1},  $\mbf{L}+\mbf{L}^\ast$ is observable, and thus it is a QND variable. Similar results are also applied to the combined Hermitian coupling operator $\imath(\mbf{L}-\mbf{L}^\ast)(t) \equiv \imath(\mbf{L}-\mbf{L}^\ast)$, in which the QND interaction can be observed by Eq. \eqref{Oct16-1} and the input-output relation \eqref{sep11-2}. Indeed, $\mbf{L}=\mbf{L}^\ast$ assumed in \cite[Sec. 4.2.4]{NY17} is a more special case, which realizes the condition $g=0$ in Eq. \eqref{sepg}, and bilateral BAE measurements are realized. In that special case, $[\mbf{L}+\mbf{L}^\ast,\mbf{H}]=0$ reduces to $[\mbf{L},\mbf{H}]=0$. 
\end{remark}

\bmrk
In the SISO case, let $\mbf{L}$ be an observable and $[\mbf{L},\mbf{H}]=0$, then $\mbf{L}$ is a QND variable and BAE measurement is also realized. If $\mbf{L}$ is not an observable, by Eq. \eqref{eq:sept16_q_out} BAE measurements can still be implemented. However conditions like  $\sum_j |C_{-,j}|^2=\sum_k |C_{+,k}|^2$ should be added to have QND variables.
\emrk

In the following subsections, we focus on the MIMO quantum systems and discuss the properties of $[\mbf{L},\mbf{H}]=0$ in the annihilation-creation, and Kalman canonical forms, respectively.


\subsection{The MIMO case}

In the annihilation-creation operator form, $\mbf{L}=\mbf{L}^\#$ is equivalent to  $C_-=C_+^\#$. In particular, if both $C_-$ and $C_+$  are real, then $\mbf{L}=\mbf{L}^\#$ is equivalent to $C_-=C_+$. 

\bmrk
It is obvious that $[\mbf{L},\mbf
{H}]=0$ implies that $[\mbf{L}^\#,\mbf{H}]=0$, and thus $[\mbf{L}\pm\mbf{L}^\#,\mbf{H}]=0$. However, either $[\mbf{L}+\mbf{L}^\#,\mbf{H}]=0$ or  $[\mbf{L}-\mbf{L}^\#,\mbf{H}]=0$ alone does not necessarily lead to $[\mbf{L},\mbf
{H}]=0$.
\emrk

The following lemma will be helpful to derive the equations given in Proposition \ref{Jan29-1}.
\begin{lemma}\cite{ZD22}\label{Jan29-2}
Let $\mbf{X}$, $\mbf{Y}$, and $\mbf{Z}$ be vectors of operators with dimension $l$, $m$, and $n$, respectively. Let $M\in\mathbf{C}^{m\times n}$, and the commutators $[\mbf{a},\mbf{b}]\in\mathbf{C}$ where $\mbf{a}$ and $\mbf{b}$ are arbitrary elements of the vectors $\mbf{X}$, $\mbf{Y}$ and $\mbf{Z}$. Then
\begin{equation*}
\left[\mbf{X},\mbf{Y}^\top M \mbf{Z}\right]=\left[\mbf{X},\mbf{Y}^\top\right]M\mbf{Z}+\left[\mbf{X},\mbf{Z}^\top\right]M^\top\mbf{Y}.    
\end{equation*} 
\end{lemma}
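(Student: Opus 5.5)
The plan is to reduce the identity to the scalar case entry by entry and then invoke bilinearity together with the Leibniz rule for commutators. Throughout, all pairwise commutators are assumed to be scalars (multiples of the identity), so they commute with every operator and can be moved freely through products.

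First, write the quadratic form out in components: $\mbf{Y}^\top M\mbf{Z}=\sum_{j=1}^m\sum_{k=1}^n M_{jk}\mbf{y}_j\mbf{z}_k$. Then fix an index $i\in\{1,\dots,l\}$ and compute the $i$-th entry of the left-hand side, $[\mbf{x}_i,\mbf{Y}^\top M\mbf{Z}]$. Since the commutator is linear in its second argument and the $M_{jk}$ are complex numbers, this reduces to $\sum_{j,k}M_{jk}[\mbf{x}_i,\mbf{y}_j\mbf{z}_k]$.

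Next, apply the derivation identity
\[
[\mbf{a},\mbf{b}\mbf{c}]=[\mbf{a},\mbf{b}]\mbf{c}+\mbf{b}[\mbf{a},\mbf{c}].
\]
Because $[\mbf{x}_i,\mbf{z}_k]$ is a scalar, the second term can be rewritten as $[\mbf{x}_i,\mbf{z}_k]\mbf{y}_j$. This is the one place the scalar-commutator hypothesis is genuinely used, and it is what allows the operator $\mbf{y}_j$ to be placed on the right, matching the stated form. The $i$-th entry therefore becomes
\[
\sum_{j,k}[\mbf{x}_i,\mbf{y}_j]M_{jk}\mbf{z}_k+\sum_{j,k}[\mbf{x}_i,\mbf{z}_k]M_{jk}\mbf{y}_j .
\]

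Finally, identify each sum with the $i$-th row of a matrix product, using the column-vector convention $[\mbf{X},\mbf{Y}^\top]=([\mbf{x}_i,\mbf{y}_j])_{i,j}$.
\begin{itemize}
\item The first sum is the $i$-th entry of $[\mbf{X},\mbf{Y}^\top]M\mbf{Z}$.
\item In the second sum, relabel so that the middle matrix has entries $(M^\top)_{kj}=M_{jk}$. This turns it into $\sum_k[\mbf{x}_i,\mbf{z}_k]\sum_j(M^\top)_{kj}\mbf{y}_j$, which is the $i$-th entry of $[\mbf{X},\mbf{Z}^\top]M^\top\mbf{Y}$.
\end{itemize}
Stacking over $i$ gives the claimed identity.

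The main obstacle is only bookkeeping: keeping the index order straight so that the transpose lands on $M$ rather than on the commutator matrix, and being consistent with the paper's vector-commutator convention. No analytic difficulty arises, since the scalar-commutator assumption removes all operator-ordering issues.
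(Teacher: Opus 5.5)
Your proof is correct. You expand entrywise, apply the Leibniz rule, use $[\mbf{x}_i,\mbf{z}_k]\in\mathbf{C}$ to move that factor past $\mbf{y}_j$, and relabel with $(M^\top)_{kj}=M_{jk}$; this yields the stated identity. The paper gives no proof of its own and only cites \cite{ZD22}, and your componentwise computation is the standard argument needing no further ingredients.
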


\bprop\label{Jan29-1}
$[\mbf{L},\mbf{H}]=0$ if and only if
\begin{equation}\label{Jan29-3}\begin{aligned}
C_-\Omega_-=C_+\Omega_+^\dagger, \ \ \ \
C_-\Omega_+=C_+\Omega_-^\top,
\end{aligned}\end{equation}
or equivalently, 
\begin{equation}\label{COmega}
\mathcal{C}\Omega=2\Delta(C_-,0)\Omega.
\end{equation}
\eprop

\begin{proof}
Inserting $\mbf{L}=C_-\mbf{a}+C_+\mbf{a}^\#$ and $\mbf{H}=\frac{1}{2}\breve{\mbf{a}}^\dagger\Omega\breve{\mbf{a}}$ into the commutator $\left[\mbf{L},\mbf{H}\right]$, by Lemma \ref{Jan29-2} we have
\begin{equation}\la{eq:jun29_LH}
\begin{aligned}
[\mbf{L},\mbf{H}]&=(C_-\Omega_--C_+\Omega_+^\dagger)\mbf{a}+(C_-\Omega_+-C_+\Omega_-^\top)\mbf{a}^\#, \\
[\mbf{L}^\#,\mbf{H}]&=(C_+^\#\Omega_--C_-^\#\Omega_+^\dagger)\mbf{a}+(C_+^\#\Omega_+-C_-^\#\Omega_-^\top)\mbf{a}^\#.
\end{aligned}\end{equation}
Thus, $\left[\mbf{L},\mbf{H}\right]=0$ if and only if Eq. \eqref{Jan29-3} holds, and is equivalent to Eq. \eqref{COmega} due to the Hermitian property of $\Omega$.  
\end{proof}

The condition $[\mathbf{L},\mathbf{H}] = 0$ established in Proposition \ref{Jan29-1} leads to several parameter degeneracies that are worth analyzing separately. These can be summarized compactly as the following corollary.  
We provide a proof for the representative cases $C_+=0$ and $\Omega_+=0$, the remaining two cases follow from analogous arguments and are therefore omitted for brevity.

\begin{corollary}\label{cor:blockdiag}
Assume that $[\mathbf{L},\mathbf{H}] = 0$ and that one of the following
conditions holds\/:
\begin{enumerate}
\item $C_+ = 0$,
\item $C_- = 0$,
\item $\Omega_+ = 0$ and $C_-C_+^\top$ is symmetric,
\item $\Omega_- = 0$ and $C_-C_+^\top$ is symmetric.
\end{enumerate}
Then the transfer matrix $\mathbb{G}[s]$ is block diagonal, and consequently
quantum BAE measurements are realized.
\end{corollary}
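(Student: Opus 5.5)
The plan is to show that, under the hypotheses, the transfer matrix of the annihilation-creation form \eqref{system:ani},
\[
G[s]=\mathcal{D}-\mathcal{C}(sI-\mathcal{A})^{-1}\mathcal{C}^\flat\mathcal{D},
\]
can be rewritten as a function of the finite-dimensional matrix $\mathcal{C}\mathcal{C}^\flat$ alone. It then suffices to prove that $\mathcal{C}\mathcal{C}^\flat$ is block diagonal. The natural route is to remove the Hamiltonian from the input-output map using the intertwining identity
\begin{equation*}
\mathcal{C}J_n\Omega=0 .
\end{equation*}
If this identity holds, then $\mathcal{C}\mathcal{A}=-\frac12\mathcal{C}\mathcal{C}^\flat\mathcal{C}$. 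By induction $\mathcal{C}\mathcal{A}^k=(-\frac12\mathcal{C}\mathcal{C}^\flat)^k\mathcal{C}$, and hence, via the Neumann series for large $|s|$ and analytic continuation,
\[
\mathcal{C}(sI-\mathcal{A})^{-1}=(sI+\tfrac12\mathcal{C}\mathcal{C}^\flat)^{-1}\mathcal{C},\qquad
G[s]=\bigl[I-(sI+\tfrac12\mathcal{C}\mathcal{C}^\flat)^{-1}\mathcal{C}\mathcal{C}^\flat\bigr]\mathcal{D}.
\]
Since $\mathcal{D}=\Delta(S,0)$ is block diagonal, block diagonality of $\mathcal{C}\mathcal{C}^\flat$ gives $G[s]=\Delta(G_-[s],0)$. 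This is the passive-type block-diagonal structure. Transporting it to the quadrature form via $V_m(\cdot)V_m^\dagger$ then yields the BAE statements, in the same way as in Proposition \ref{prop:BAE} whenever the resulting $G_-$ has the required realness.

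\emph{Case $C_+=0$.} Proposition \ref{Jan29-1} gives $C_-\Omega_-=0$ and $C_-\Omega_+=0$. Here $\mathcal{C}=\mathrm{diag}(C_-,C_-^\#)$, so $\mathcal{C}J_n\Omega$ has blocks $C_-\Omega_-$, $C_-\Omega_+$, $-C_-^\#\Omega_+^\#$ and $-C_-^\#\Omega_-^\#$. All four vanish, the last two by complex conjugation of the first two. Moreover $\mathcal{C}^\flat=J_n\mathcal{C}^\dagger J_m=\mathrm{diag}(C_-^\dagger,C_-^\top)$, so $\mathcal{C}\mathcal{C}^\flat=\mathrm{diag}(C_-C_-^\dagger,C_-^\#C_-^\top)$ is trivially block diagonal. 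The case $C_-=0$ is identical with the roles of the blocks exchanged.

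\emph{Case $\Omega_+=0$ with $C_-C_+^\top$ symmetric.} Proposition \ref{Jan29-1} gives $C_-\Omega_-=0$ and $C_+\Omega_-^\top=0$. Since $\Omega_-$ is Hermitian, $\Omega_-^\top=\Omega_-^\#$, so also $C_+\Omega_-^\#=0$, and conjugating gives $C_+^\#\Omega_-=0$. Then $\mathcal{C}J_n\Omega$ has blocks $C_-\Omega_-$, $-C_+\Omega_-^\#$, $C_+^\#\Omega_-$ and $-C_-^\#\Omega_-^\#$, all of which vanish. A direct block computation gives the $(1,2)$ block of $\mathcal{C}\mathcal{C}^\flat=\mathcal{C}J_n\mathcal{C}^\dagger J_m$ as $-(C_-C_+^\top-C_+C_-^\top)$, and the $(2,1)$ block is its conjugate. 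Both vanish exactly by the symmetry hypothesis. The case $\Omega_-=0$ is analogous, using $C_-\Omega_+=0$ and $C_+\Omega_+^\dagger=0$.

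The step I expect to be delicate is the final translation from block diagonality of $G[s]$ in the $(\mathbf{a},\mathbf{a}^\#)$ coordinates to vanishing of the cross blocks between conjugate quadratures. The conjugation $V_m\Delta(G_-,0)V_m^\dagger$ gives
\[
\begin{bmatrix}\mathrm{Re}\,G_- & -\mathrm{Im}\,G_-\\ \mathrm{Im}\,G_- & \mathrm{Re}\,G_-\end{bmatrix},
\]
so I would interpret the corollary's "block diagonal" in the doubled-up sense. I would then check separately, using $S$ real and the explicit $G_-[s]$ obtained above, when the quadrature cross terms $\mathbb{G}_{\mathbf{p}_{\rm in}\to\mathbf{q}_{\rm out}}$ and $\mathbb{G}_{\mathbf{q}_{\rm in}\to\mathbf{p}_{\rm out}}$ vanish. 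This final check is where I would be most careful.
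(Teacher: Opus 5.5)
Your route is the paper's. In each case the paper derives $\mathcal{C}J_n\Omega=0$ from Proposition \ref{Jan29-1}, reduces the transfer function to a function of $\mathcal{C}\mathcal{C}^\flat$ alone, and uses the symmetry of $C_-C_+^\top$ to remove the off-diagonal blocks. The only difference is in how the reduction is justified. The paper quotes the expansion $\Sigma[s]=\frac{1}{2s}\mathcal{C}\sum_{k}(-\imath/s)^k(J_n\Omega)^k\mathcal{C}^\flat$ from the literature. Your identity $\mathcal{C}\mathcal{A}=-\frac12\mathcal{C}\mathcal{C}^\flat\mathcal{C}$ makes the argument self-contained. Equivalently, $(sI+\frac12\mathcal{C}\mathcal{C}^\flat)\mathcal{C}=\mathcal{C}(sI-\mathcal{A})$ gives the resolvent identity directly, with no series needed. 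Your block computations of $\mathcal{C}J_n\Omega$ and of the $(1,2)$ block $-(C_-C_+^\top-C_+C_-^\top)$ are correct.

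The step you flagged is a real issue, and the paper does not settle it. The two diagonal blocks displayed in its proof are (for $S=I$) your $G_-$ and its conjugate. That is block diagonality in the $(\mathbf{a},\mathbf{a}^\#)$ coordinates, and BAE is then simply asserted.

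Set $M_-=C_-C_-^\dagger-C_+C_+^\dagger$ and $G_-[s]=(sI+\frac12M_-)^{-1}(sI-\frac12M_-)S$. The quadrature cross blocks are $\mathbb{G}_{\mathbf{p}_{\rm in}\to\mathbf{q}_{\rm out}}=-\mathrm{Im}\,G_-$ and $\mathbb{G}_{\mathbf{q}_{\rm in}\to\mathbf{p}_{\rm out}}=\mathrm{Im}\,G_-$, with the imaginary part taken coefficientwise. Expanding at $s=\infty$, $G_-=(I-M_-/s+\cdots)S$, so these blocks vanish if and only if $S$ and $M_-$ are real. This holds, for example, in the SISO case with $S=\pm1$, or when $S$ and $C_\pm$ are real.

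Without realness the conclusion fails. Take $n=1$, $m=2$, $\Omega=0$, $S=I_2$, $C_+=0$, $C_-=[1\ \ \imath]^\top$. Then $[\mathbf{L},\mathbf{H}]=0$ trivially, but $G_-[s]=I_2-\frac{1}{s+1}\bigl[\begin{smallmatrix}1&-\imath\\ \imath&1\end{smallmatrix}\bigr]$, so $\mathbb{G}_{\mathbf{p}_{\rm in}\to\mathbf{q}_{\rm out}}\neq0$. Your proposal therefore proves what is actually provable here: block diagonality in the doubled-up sense, and quadrature BAE only under the extra realness condition you anticipated.
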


\begin{proof}
\noindent\textit{Case 1: $C_+ = 0$.} By Proposition \ref{Jan29-1}, the condition $[\mathbf{L},\mathbf{H}] = 0$ reduces to 
$C_-\Omega_- = 0$ and $C_-\Omega_+ = 0$, which implies 
$\mathcal{C}\Omega =0$.  
Using the expansion 
$\Sigma[s] = \frac{1}{2s}\mathcal{C}\sum_{n=0}^{\infty}(\frac{-\imath}{s})^{n}
(J_n\Omega)^{n}\mathcal{C}^{\flat}$ as in \cite{GZ15}, we obtain 
$\mathcal{C}J_n\Omega = 0$ and therefore 
$\Sigma[s] = \frac{1}{2s}\mathcal{C}\mathcal{C}^{\flat}$.
The transfer function becomes
\begin{equation*}\begin{aligned}
\mathbb{G}[s]=\mathrm{diag}\bigg\{&
(sI-\frac{1}{2}C_-C_-^\dagger)(sI+\frac{1}{2}C_-C_-^\dagger)^{-1}, \\
&(sI-\frac{1}{2}C_-^\#C_-^\top)(sI+\frac{1}{2}C_-^\#C_-^\top)^{-1} \bigg\},
\end{aligned}\end{equation*}
which is block diagonal, so BAE measurements are realized.

\noindent\textit{Case 3: $\Omega_+ = 0$ and $C_-C_+^\top$ is symmetric.}
Here $[\mathbf{L},\mathbf{H}] = 0$ gives 
$C_-\Omega_- = 0$ and $C_+\Omega_-^\top = 0$, 
which again yields $\mathcal{C}\Omega =0$.  
A similar calculation leads to 
$\mathcal{C}J_n\Omega = 0$ and $\Sigma[s] = \frac{1}{2s}\mathcal{C}\mathcal{C}^{\flat}$.
With the additional symmetry condition $C_-C_+^\top = C_+C_-^\top$, 
the transfer matrix simplifies to the block diagonal form
\begin{equation*}\begin{aligned}
&\mathbb{G}[s] \\
=&\mathrm{diag}\bigg\{(sI-\frac{1}{2}(C_-C_-^\dagger-C_+C_+^\dagger))(sI+\frac{1}{2}(C_-C_-^\dagger-C_+C_+^\dagger))^{-1}, \\
&(sI+\frac{1}{2}(C_+^\#C_+^\top-C_-^\#C_-^\top))(sI-\frac{1}{2}(C_+^\#C_+^\top-C_-^\#C_-^\top))^{-1}\bigg\},
\end{aligned}\end{equation*}
and BAE measurements follow.
\end{proof}

\bmrk
The condition $C_-C_+^\top$ being symmetric holds if $C_-=\pm C_+$ ($\mbf{q}$ or $\mbf{p}$ coupling)  or the quantum system is with the SISO case (see Section \ref{siso} for details).
\emrk

\bmrk
In the above four special cases with $[\mbf{L},\mbf{H}]=0$, we all have $\mathcal{C}\Omega=0$, which means that the transfer function $\mathbb{G}[s]$ does not depend on $\Omega$. 
\emrk

\subsection{The quantum Kalman canonical form}

As introduced in Section \ref{preliminaries}, the dynamics of a linear quantum system can be represented in the real quadrature form \eqref{system:qua}. Utilizing the concepts of controllability and observability \cite[Def. 1]{GZ15}, \cite[Def. 3.1]{ZD22}, \cite[Def. 2.1]{DZLP26}, a special realization known as the \emph{quantum Kalman canonical form} was derived in~\cite[Thm.~4.4]{ZGPG18}. A distinctive feature of this decomposition is that, due to the physical realizability constraints, the classical ``$c\bar{o}$'' (controllable but unobservable) and ``$\bar{c}o$'' (uncontrollable but observable) subsystems are always paired. They do not exist independently; instead, their modes are conjugately coupled to form an ``$h$'' subsystem. Consequently, the whole system is decomposed into three parts: a ``$co$'' subsystem (controllable and observable), the aforementioned ``$h$'' subsystem, and a ``$\bar{c}\bar{o}$'' subsystem that is completely isolated from the input-output channels. The dimensions of the ``$co$'',  ``$\bar{c}\bar{o}$'', and ``$h$'' subsystems are $2n_1$, $2n_2$, $2n_3$, respectively, where $n_1,n_2,n_3 \geq 0$ and $n_1+n_2+n_3 = n$. 
The system matrices in this canonical form are given by~\cite[Eq.~(67)]{ZGPG18}
\begin{equation*}
\bar{A} = \begin{bmatrix}
A_h^{11} & A_h^{12} & A_{12} & A_{13} \\
0 & A_h^{22} & 0 & 0 \\
0 & A_{21} & A_{co} & 0 \\
0 & A_{31} & A_{\bar{c}o} & A_{\bar{c}\bar{o}}
\end{bmatrix},\quad
\bar{B} = \begin{bmatrix}
B_h \\ 0 \\ B_{co} \\ 0
\end{bmatrix},\quad
\bar{C} = \begin{bmatrix}
0 & C_h & C_{co} & 0
\end{bmatrix},
\end{equation*}
with $\bar{D}=I_{2m}$. The state vector is partitioned as $\bar{\mathbf{x}}=[\mathbf{q}_h^{\top},\mathbf{p}_h^{\top},\mathbf{x}_{co}^{\top},\mathbf{x}_{\bar{c}\bar{o}}^{\top}]^{\top}$, where $\mathbf{q}_h$ and $\mathbf{p}_h$ are conjugate variables constituting the ``$h$'' subsystem, $\mathbf{x}_{co}$ corresponds to the ``$co$'' subsystem, and $\mathbf{x}_{\bar{c}\bar{o}}$ forms the isolated ``$\bar{c}\bar{o}$'' part.

As shown in \cite[Lem. 3.2]{ZPL20}, the system Hamiltonian $\mathbf{H}$ in the quantum Kalman canonical form can be written as $\mathbf{H} = \frac{1}{2}\bar{\mathbf{x}}^{\top}\tilde{\mathbb{H}}\bar{\mathbf{x}}$, where $\tilde{\mathbb{H}}$ is a real symmetric matrix, and the coupling operator is given by $\mathbf{L} = \Gamma\bar{\mathbf{x}}$ with a complex matrix $\Gamma\in\mathbf{C}^{m\times 2n}$. 
In the Kalman canonical form, the matrices $\tilde{\mathbb{H}}$ and $\Gamma$ inherit the block structure of the decomposition. 
Specifically, $\Gamma$ is of the form~\cite[Eq.~(13)]{ZPL20}
\begin{equation*}
\begin{bmatrix} 
\Gamma \\
\Gamma^\#
\end{bmatrix}= \begin{bmatrix}
0 & \Gamma_h & \Gamma_{co} & 0
\end{bmatrix},
\end{equation*}
while $\tilde{\mathbb{H}}$ has the block structure given in \cite[Eq.~(12)]{ZPL20}. The state variables satisfy the canonical commutation relations $[\bar{\mathbf{x}}(t),\bar{\mathbf{x}}(t)^{\top}] = \imath\bar{\mathbb{J}}_n$, where 
\begin{equation*}
\bar{\mathbb{J}}_n = \begin{bmatrix}
\mathbb{J}_{n_3} & 0 & 0 \\
0 & \mathbb{J}_{n_1} & 0 \\
0 & 0 & \mathbb{J}_{n_2}
\end{bmatrix}.
\end{equation*}

Under the quantum Kalman canonical form, the condition $[\mathbf{L},\mathbf{H}] = 0$ is equivalent to
\begin{equation*}
\Gamma\bar{\mathbb{J}}_n\tilde{\mathbb{H}}=0.
\end{equation*}
Thus, by \cite[Lemma 3.2]{ZPL20} we have
\begin{equation*}
\Gamma_{co}\bar{\mathbb{J}}_n\tilde{\mathbb{H}}=
\left[\begin{array}{c}
\Gamma \\
\Gamma^\# 
\end{array}\right]\bar{\mathbb{J}}_n\tilde{\mathbb{H}}=0,    
\end{equation*}
where
\begin{equation*}
\Gamma_{co}=\left[\begin{array}{cc}
\Gamma_{co,q} & \Gamma_{co,p} \\
\Gamma_{co,q}^\# & \Gamma_{co,p}^\#
\end{array}\right],
\end{equation*}
which yields 
\begin{equation*}
C_hA_h^{22}+C_{co}\mathbb{J}_{n_1}A_{12}^\top
+\frac{1}{2}C_{co}B_{co}\mathbb{J}_mB_h^\top=0,
\end{equation*}
and
\begin{equation}\label{Jan13-1}
C_{co}A_{co}=\frac{1}{2}C_{co}B_{co}C_{co},
\end{equation}
where
\begin{equation*}
C_{co}=V_m\Gamma_{co}=\left[\begin{array}{c}
C_{co,q} \\
C_{co,p}
\end{array}\right]=\sqrt{2}\left[\begin{array}{cc}
{\rm Re}(\Gamma_{co,q}) & {\rm Re}(\Gamma_{co,p}) \\
{\rm Im}(\Gamma_{co,q}) & {\rm Im}(\Gamma_{co,p})
\end{array}\right].    
\end{equation*}

The following theorem presents the characterization of BAE measurements by
the linear quantum system in the Kalman canonical form.

\begin{theorem}
The linear quantum system in the Kalman canonical form under the condition of $[\mbf{L},\mbf{H}]=0$ realizes the BAE measurements of $\mbf{q}_{\rm out}$ with respect to $\mbf{p}_{\rm in}$, i.e., the transfer function $\Xi_{\mbf{p}_{\rm in}\rightarrow \mbf{q}_{\rm out}}(s) \equiv 0$ 
if and only if
\begin{equation*}
C_{co,q}B_{co,p}=0,    
\end{equation*}
or equivalently,
\begin{equation*}
{\rm Re}(\Gamma_{co,q}) {\rm Re}(\Gamma_{co,p}^\top) =
{\rm Re}(\Gamma_{co,p})  {\rm Re}(\Gamma_{co,q}^\top).  
\end{equation*}
Meanwhile, the linear quantum system in the Kalman canonical form under the condition of $[\mbf{L},\mbf{H}]=0$ realizes the BAE measurements of $\mbf{p}_{\rm out}$ with respect to $\mbf{q}_{\rm in}$, i.e., the transfer function $\Xi_{\mbf{q}_{\rm in}\rightarrow \mbf{p}_{\rm out}}(s) \equiv 0$ if and only if 
\begin{equation*}
C_{co,p}B_{co,q}=0,    
\end{equation*}
or equivalently,
\begin{equation*}
{\rm Im}(\Gamma_{co,q}) {\rm Im}(\Gamma_{co,p}^\top) =
{\rm Im}(\Gamma_{co,p})  {\rm Im}(\Gamma_{co,q}^\top).  
\end{equation*}
Furthermore, the realization of both the two cases of BAE measurements imply that ${\rm Re}(\Gamma_{co,q}\Gamma_{co,p}^\top)$ is symmetric.
\end{theorem}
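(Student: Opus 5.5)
The transfer function of the Kalman canonical realization is $\Xi(s)=I_{2m}+\bar C(sI-\bar A)^{-1}\bar B$. Only the ``$co$'' subsystem is both reachable from the input and visible at the output, so I expect $\Xi(s)=I_{2m}+C_{co}(sI-A_{co})^{-1}B_{co}$. I will verify this from the block structure of $(\bar A,\bar B,\bar C)$.

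\textbf{Step 1 (reduce to the $co$ part).} The $\mathbf{p}_h$ block has no input ($\bar B$ has a zero in that row) and is driven only by itself ($A_h^{22}$). So from zero initial state $\mathbf{p}_h\equiv 0$ in the input–output map. The $\mathbf{q}_h$ and $\bar c\bar o$ states never reach the output, because the corresponding columns of $\bar C$ vanish. The $co$ state is driven only by $\mathbf{p}_h$ (through $A_{21}$) and by the input. Hence the formula above.

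\textbf{Step 2 (exploit $[\mbf L,\mbf H]=0$).} Iterating \eqref{Jan13-1} gives
\[
C_{co}A_{co}^k=\Big(\tfrac12 C_{co}B_{co}\Big)^kC_{co}.
\]
Summing the Neumann series then yields
\[
C_{co}(sI-A_{co})^{-1}=\Big(sI-\tfrac12 C_{co}B_{co}\Big)^{-1}C_{co},
\]
and therefore
\[
\Xi(s)=I+\Big(sI-\tfrac12 M\Big)^{-1}M,\qquad M\triangleq C_{co}B_{co}.
\]
Equivalently,
\[
\Xi(s)=\Big(sI-\tfrac12 M\Big)^{-1}\Big(sI+\tfrac12 M\Big).
\]

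\textbf{Step 3 (read off the off-diagonal block).} Partition $M$ into $q/p$ blocks, $M=\begin{bmatrix}M_{qq}&M_{qp}\\ M_{pq}&M_{pp}\end{bmatrix}$ with $M_{qp}=C_{co,q}B_{co,p}$. The series $\Xi(s)=I+\sum_{k\ge0}s^{-k-1}(M/2)^kM$ has first nonzero off-diagonal coefficient $M_{qp}$. So $\Xi_{\mbf p_{\rm in}\to\mbf q_{\rm out}}\equiv0$ forces $M_{qp}=0$.

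Conversely, if $M_{qp}=0$ then $M$ is block lower triangular. All powers of $M$ are then block lower triangular, so the $(q,p)$ block of $\Xi$ vanishes identically. The $\Xi_{\mbf q_{\rm in}\to\mbf p_{\rm out}}$ case is identical with $M_{pq}=C_{co,p}B_{co,q}$ and block upper triangularity.

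\textbf{Step 4 (translate into $\Gamma$).} Physical realizability with $\bar D=I$ gives $B_{co}=-\mathbb{J}_{n_1}C_{co}^\top\mathbb{J}_m$, i.e. $B_{co}=C_{co}^\sharp$ up to sign. Writing $C_{co,q}=\sqrt2[\,{\rm Re}\Gamma_{co,q}\ \ {\rm Re}\Gamma_{co,p}\,]$ and computing, $C_{co,q}B_{co,p}$ becomes a multiple of
\[
{\rm Re}(\Gamma_{co,q}){\rm Re}(\Gamma_{co,p})^\top-{\rm Re}(\Gamma_{co,p}){\rm Re}(\Gamma_{co,q})^\top,
\]
which gives the first equivalence. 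The imaginary-part analogue gives the second.

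\textbf{Step 5 (final claim).} ${\rm Re}(\Gamma_{co,q}\Gamma_{co,p}^\top)={\rm Re}\Gamma_{co,q}{\rm Re}\Gamma_{co,p}^\top-{\rm Im}\Gamma_{co,q}{\rm Im}\Gamma_{co,p}^\top$. When both conditions hold, each of the two terms is symmetric, so their difference is symmetric.

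\textbf{Main obstacle.} The delicate point is Step 4: getting the sign and transpose conventions of $B_{co}$ right from realizability so that the product collapses exactly to the antisymmetric combination. Step 1's claim that the $h$ subsystem contributes nothing relies on reading the canonical structure correctly, so I would also check it against \cite[Eq.~(67)]{ZGPG18}.
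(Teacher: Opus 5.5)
Your proposal is correct and follows essentially the paper's argument: both use Eq.~\eqref{Jan13-1} to get $C_{co}A_{co}^kB_{co}=2^{-k}(C_{co}B_{co})^{k+1}$ and then read the BAE condition off the block-triangular structure of $C_{co}B_{co}$. The difference is that you derive the reduction $\Xi(s)=I+C_{co}(sI-A_{co})^{-1}B_{co}$ and the closed form $(sI-\tfrac12 M)^{-1}(sI+\tfrac12 M)$ directly from the canonical block structure, where the paper cites \cite[Theorem 4.1]{ZPL20}; your Steps 4--5 also verify the $\Gamma$-equivalences and the final symmetry claim, which the paper's proof does not check. One small correction: in the paper's conventions $B_{co}=\mathbb{J}_{n_1}C_{co}^\top\mathbb{J}_m$ (consistent with $C_{co}=\sqrt{\kappa}I$, $B_{co}=-\sqrt{\kappa}I$ in the example), so $C_{co,q}B_{co,p}=2\bigl({\rm Re}(\Gamma_{co,q}){\rm Re}(\Gamma_{co,p})^\top-{\rm Re}(\Gamma_{co,p}){\rm Re}(\Gamma_{co,q})^\top\bigr)$, though the sign is immaterial for a vanishing condition.
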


\begin{proof}
Partition the system matrices $C_{co}$ and $B_{co}$ as
\begin{equation*}
C_{co}=\left[\begin{array}{c}
C_{co,q} \\
C_{co,p}
\end{array}\right], ~~~~ B_{co}=\left[\begin{array}{cc}
B_{co,q} & B_{co,p} 
\end{array}\right],   
\end{equation*}
respectively. Under the condition $[\mbf{L},\mbf{H}]=0$, by Eq. \eqref{Jan13-1} we have
\begin{equation*}\begin{aligned}
C_{co}A_{co}^kB_{co}=\frac{1}{2^k}(C_{co}B_{co})^{k+1}.
\end{aligned}\end{equation*}

Thus, if $C_{co,q}B_{co,p}=0$, then $C_{co}A_{co}^kB_{co}$ is block lower triangular, which implies that $C_{co,q}A_{co}^kB_{co,p}=0$. By \cite[Theorem 4.1]{ZPL20}, the quantum Kalman canonical form realizes the BAE measurements of $\mbf{q}_{\rm out}$ with respect to $\mbf{p}_{\rm in}$. On the other hand, if the BAE measurements of $\mbf{q}_{\rm out}$ with respect to $\mbf{p}_{\rm in}$ is realized, then by \cite[Theorem 4.1]{ZPL20}
\begin{equation*}
C_{co,q}A_{co}^kB_{co,p}=0, ~~ k=0,1,\ldots,
\end{equation*}
which implies that $C_{co,q}B_{co,p}=0$. The proof of the BAE measurements of $\mbf{p}_{\rm out}$ with respect to $\mbf{q}_{\rm in}$ follows in a similar way. 
\end{proof}

\begin{example}
In \cite[Fig. 1(A)]{LOW+21}, it can be calculated that
\begin{equation*}
C_{co}=\left[\begin{array}{c}
C_{co,q} \\
C_{co,p} 
\end{array}\right]=\left[\begin{array}{cc}
\sqrt{\kappa} & 0 \\
0 & \sqrt{\kappa}
\end{array}\right],  ~~  
B_{co}=\left[\begin{array}{cc}
B_{co,q} & B_{co,p} 
\end{array}\right]=-\left[\begin{array}{cc}
\sqrt{\kappa} & 0 \\
0 & \sqrt{\kappa}
\end{array}\right],
\end{equation*}
which imply that
\begin{equation*}
\left\{\begin{array}{c}
C_{co,q}B_{co,p}=0, \\
C_{co,p}B_{co,q}=0.
\end{array}\right.     
\end{equation*}
Thus, both the BAE measurements of $\mbf{q}_{\rm out}$ with respect to $\mbf{p}_{\rm in}$ and $\mbf{p}_{\rm out}$ with respect to $\mbf{q}_{\rm in}$ are realized in this experiment. 
\end{example}

\bmrk
According to  \cite[Lemma 3.2]{ZPL20}, if $\tilde{\mathbb{H}}$ is of the form of  \cite[Eq. (12)]{ZPL20} and $\Gamma $ is of the form of \cite[Eq. (13)]{ZPL20}, then the linear system is of the Kalman canonical form. If $\Gamma_h \neq 0$, then the variables $\mbf{p}_h$ exist, which means there are QND variables.
\emrk

\subsection{Characterization of QND variables under QND interaction}

The QND interaction condition $[\mathbf{L},\mathbf{H}]=0$ analyzed in the preceding subsections not only enables BAE measurements but also singles out specific system observables that remain immune to measurement back‑action.  
The following theorem characterizes these QND variables.

\begin{theorem}\label{thm:QND_char}
Assume that $[\mathbf{L},\mathbf{H}]=0$ and consider the coupling operator 
$\mathbf{L} = C_-\mathbf{a} + C_+\mathbf{a}^\#$.
\begin{enumerate}
  \item If $C_-=-C_+$ ($\mathbf{p}$‑coupling) and $\Omega_-=-\Omega_+$, 
        then $\mathbf{p}$ is a QND variable provided the subsystem 
        $(\mathrm{Im}(\Omega_-),0,-\mathrm{Im}(C_-))$ or 
        $(\mathrm{Im}(\Omega_-),0,\mathrm{Re}(C_-))$ is observable.
        Moreover, BAE measurements are realized from $\mathbf{p}_{\mathrm{in}}$ 
        to $\mathbf{q}_{\mathrm{out}}$ and from $\mathbf{q}_{\mathrm{in}}$ 
        to $\mathbf{p}_{\mathrm{out}}$.
  \item If $C_- = C_+$ ($\mathbf{q}$‑coupling) and $\Omega_- = \Omega_+$, 
        then $\mathbf{q}$ is a QND variable provided the subsystem 
        $(\mathrm{Im}(\Omega_-),0,\mathrm{Im}(C_-))$ or 
        $(\mathrm{Im}(\Omega_-),0,\mathrm{Re}(C_-))$ is observable.
        Moreover, BAE measurements are realized from $\mathbf{p}_{\mathrm{in}}$ 
        to $\mathbf{q}_{\mathrm{out}}$ and from $\mathbf{q}_{\mathrm{in}}$ 
        to $\mathbf{p}_{\mathrm{out}}$.
\end{enumerate}
\end{theorem}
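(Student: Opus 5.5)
We treat case 2 ($\mathbf{q}$-coupling) in detail; case 1 follows by the symmetric substitution $C_+\to -C_+$, $\Omega_+\to-\Omega_+$, which interchanges the roles of $\mathbf{q}$ and $\mathbf{p}$.

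\textbf{Step 1: reduce the hypotheses.} With $C_-=C_+$ and $\Omega_-=\Omega_+$, Hermiticity of $\Omega=\Delta(\Omega_-,\Omega_+)$ gives $\Omega_-=\Omega_-^\dagger$ and $\Omega_+=\Omega_+^\top$. Hence $\Omega_-$ is both Hermitian and symmetric, so $\mathrm{Re}(\Omega_-)$ is symmetric and $\mathrm{Im}(\Omega_-)$ is simultaneously antisymmetric and symmetric. Therefore $\Omega_-=\mathrm{Re}(\Omega_-)$ is real symmetric. The term $\mathrm{Im}(\Omega_-)$ in the statement should then be read through the formulas \eqref{eq:mar24_JH}. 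There the blocks become $\mathbb{J}_n\mathbb{H}=\bigl[\begin{smallmatrix}2\mathrm{Im}\Omega_- & 0\\ -2\mathrm{Re}\Omega_- & 0\end{smallmatrix}\bigr]$, so $\mathbb{J}_n\mathbb{H}$ is block lower triangular with a zero $(2,2)$ block. Likewise, $C_-=C_+$ gives $\mathbb{C}=\bigl[\begin{smallmatrix}2\mathrm{Re}C_- & 0\\ 2\mathrm{Im}C_- & 0\end{smallmatrix}\bigr]$, so only $\mathbf{q}$ enters $\mathbf{L}$. This is consistent with Proposition \ref{Jan29-1}, since \eqref{Jan29-3} reduces to the single condition $C_-\Omega_-=C_-\Omega_-^\dagger$.

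\textbf{Step 2: the $\mathbf{q}$-dynamics is closed.} Next I compute $\mathbb{C}^\sharp\mathbb{C}$ from \eqref{eq:apr_C_sharp_C} with $C_-=C_+$. The terms $C_-^\dagger C_- - C_+^\dagger C_+$ cancel, and so do $C_-^\dagger C_+ - C_+^\top C_-^\#$ wherever the relevant symmetry holds. Alternatively, I would argue directly that $[\mathbf{L}^\dagger,\mathbf{q}]=0$ because $\mathbf{L}$ is a function of $\mathbf{q}$ only. Then in \eqref{dX} applied to $\mathbf{q}$, the noise terms vanish and the Lindblad correction vanishes, leaving $\dot{\mathbf{q}}=-\imath[\mathbf{q},\mathbf{H}]$. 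Because $[\mathbf{L},\mathbf{H}]=0$ and the $(1,2)$ block of $\mathbb{J}_n\mathbb{H}$ is zero, this gives $\dot{\mathbf{q}}=A_q\mathbf{q}$ with no $\mathbf{p}$ and no input. So $\mathbf{q}(t)$ evolves autonomously and is unaffected by the measurement back-action, which enters only $\mathbf{p}$ through $\mathbb{B}$.

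\textbf{Step 3: QND and BAE.} The output $\mathbf{q}_{\rm out}$ (respectively $\mathbf{p}_{\rm out}$) equals $2\mathrm{Re}(C_-)\mathbf{q}$ (respectively $2\mathrm{Im}(C_-)\mathbf{q}$) plus the corresponding input quadrature. By Step 2 these produce transfer blocks $C(sI-A_q)^{-1}\cdot 0=0$ from the conjugate inputs, so both BAE statements follow, exactly as in the proof of Theorem \ref{lem:BAE1}. For the QND claim I invoke \cite[Remark 4.9]{ZGPG18} as in Theorem \ref{lem:BAE1}. The variable $\mathbf{q}$ is dynamically closed, free of input noise, and recoverable from the output provided the pair $(A_q,\mathrm{Re}C_-)$ or $(A_q,\mathrm{Im}C_-)$ is observable. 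Here $A_q$ is the $\mathrm{Im}(\Omega_-)$-type block of the statement.

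\textbf{Main obstacle.} The delicate point is Step 2: verifying that the $(1,1)$ block of $\frac12\mathbb{C}^\sharp\mathbb{C}$ and the $(1,2)$ block of $\mathbb{A}$ both vanish for general complex $C_-$. I would handle this via the commutator argument rather than the matrix formula, since $\mathbf{L}$ depends only on $\mathbf{q}$ and all components of $\mathbf{q}$ commute.
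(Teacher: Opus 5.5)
Your proof is correct and follows the paper's plan. You show that the coupled quadrature obeys a closed, input-free equation, read both BAE statements off the input--output relations (with $S=I$, which the paper also assumes tacitly), and obtain the QND property from the observability hypothesis.

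\textbf{Where your route differs.} The paper computes $\mathbb{C}$, $\mathbb{B}$, $\mathbb{C}^\sharp\mathbb{C}$ and $\mathbb{A}$ block by block for the $\mathbf{p}$-coupling case and writes out the full equations of motion. This has the advantage of showing explicitly how the back-action is pushed into the conjugate quadrature. You instead apply \eqref{dX} to $\mathbf{q}$. Since $\mathbf{L}$ and $\mathbf{L}^\#$ are linear in $\mathbf{q}$ alone and the components of $\mathbf{q}$ commute, the dissipative and noise terms vanish identically. This leaves $\dot{\mathbf{q}}=-\imath[\mathbf{q},\mathbf{H}]$, whose $\mathbf{p}$-coefficient $\mathrm{Re}(\Omega_--\Omega_+)$ is zero. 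Your argument is shorter and needs no block algebra. Note that the hypothesis $[\mathbf{L},\mathbf{H}]=0$, which you invoke at this point, is not what does the work.

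\textbf{Delete the first route in Step 2.} The cancellation argument via \eqref{eq:apr_C_sharp_C} does not hold. With $C_-=C_+$ the combination $C_-^\dagger C_+-C_+^\top C_-^\#=2\imath\,\mathrm{Im}(C_-^\dagger C_-)$ does not cancel. Inserting $C_-=C_+$ into the off-diagonal entries of \eqref{eq:apr_C_sharp_C} as printed leaves a $(1,2)$ block $-4\,\mathrm{Im}(C_-^\dagger C_-)$, which is nonzero for generic complex $C_-$. A direct computation from $\mathbb{C}^\sharp=-\mathbb{J}_n\mathbb{C}^\top\mathbb{J}_m$ instead gives a $\mathbb{C}^\sharp\mathbb{C}$ whose only nonzero block is the $(2,1)$ one, consistent with your commutator argument.

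\textbf{Follow Step 1 through to its consequences.} Step 1 contains a point the paper misses. State its consequences instead of saying that $\mathrm{Im}(\Omega_-)$ ``should be read through'' \eqref{eq:mar24_JH}. Hermiticity of $\Omega$ gives $\Omega_-=\Omega_-^\dagger$ and $\Omega_+=\Omega_+^\top$, so $\Omega_-=\pm\Omega_+$ forces $\Omega_-$ to be real symmetric, hence $\mathrm{Im}(\Omega_-)=0$. It follows that:
\begin{itemize}
\item both identities in \eqref{Jan29-3} hold automatically, so $[\mathbf{L},\mathbf{H}]=0$ is implied by the other hypotheses;
\item the coupled quadrature is a constant of motion ($A_q=0$);
\item the transfer matrix is the identity;
\item the observability condition in the statement reduces to $\mathrm{Re}(C_-)$ or $\mathrm{Im}(C_-)$ having full column rank.
\end{itemize}

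\textbf{Case 1.} Your reduction of case 1 to case 2 is fine. The substitution $C_+\mapsto-C_+$, $\Omega_+\mapsto-\Omega_+$ amounts to $\mathbf{a}\mapsto\imath\mathbf{a}$ together with $\mathbf{b}_{\rm in}\mapsto\imath\mathbf{b}_{\rm in}$. This swaps $\mathbf{q}$ and $\mathbf{p}$, up to sign, for both the system and the field. The pair of BAE claims is invariant under this swap, and the observability conditions are invariant up to sign. The paper goes the other way: it treats case 1 in detail and calls case 2 analogous.
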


\begin{proof}
Recall from the quadrature representation that 
$\Lambda = \frac{1}{\sqrt{2}}[\,C_-+C_+ \;\; \imath(C_--C_+)\,] = [\,\Lambda_q \;\; \Lambda_p\,]$,
and that $\mathbb{B}$ and $\mathbb{C}$ are expressed in terms of $\Lambda_q$ and $\Lambda_p$.
For a variable to be a QND observable, it must belong to the uncontrollable yet observable
subspace, which requires either $\Lambda_q = 0$ or $\Lambda_p = 0$.

\noindent\textit{Case 1: $\Lambda_q = 0$, i.e.\ $C_- = -C_+$ ( $\mathbf{p}$‑coupling).}
In this case $C_--C_+ = 2C_- = -\sqrt{2}\,\imath\Lambda_p$ and $\Lambda_p = \sqrt{2}\,\imath\,C_-$.
A direct computation yields
\[
\mathbb{C} = 2\begin{bmatrix} 0 & -\mathrm{Im}(C_-) \\ 0 & \mathrm{Re}(C_-) \end{bmatrix},\qquad
\mathbb{B} = 2\begin{bmatrix} -\mathrm{Re}(C_-^\dagger) & \mathrm{Im}(C_-^\dagger) \\ 0 & 0 \end{bmatrix},
\]
and
\[
\mathbb{C}^\sharp\mathbb{C} = 4\begin{bmatrix} 0 & -\mathrm{Re}(C_-^\dagger)\mathrm{Im}(C_-)-\mathrm{Im}(C_-^\dagger)\mathrm{Re}(C_-) \\ 0 & 0 \end{bmatrix}.
\]
Imposing $\Omega_- = -\Omega_+$ gives $\Omega_--\Omega_+ = 2\Omega_-$, and the system matrix becomes
\[
\mathbb{A} = 2\begin{bmatrix} 0 & \mathrm{Re}(\Omega_-) \\ 0 & \mathrm{Im}(\Omega_-) \end{bmatrix}
           - 2\begin{bmatrix} 0 & -\mathrm{Re}(C_-^\dagger)\mathrm{Im}(C_-)-\mathrm{Im}(C_-^\dagger)\mathrm{Re}(C_-) \\ 0 & 0 \end{bmatrix}.
\]
The resulting equations of motion are
\begin{equation}\label{eq:QND_p}\begin{aligned}
\dot{\mathbf{q}} &= 2\bigl[\mathrm{Re}(\Omega_-) + \mathrm{Re}(C_-^\dagger)\mathrm{Im}(C_-) + \mathrm{Im}(C_-^\dagger)\mathrm{Re}(C_-)\bigr]\mathbf{p} \\
                 &\quad - 2\mathrm{Re}(C_-^\dagger)\mathbf{q}_{\mathrm{in}} + 2\mathrm{Im}(C_-^\dagger)\mathbf{p}_{\mathrm{in}}, \\
\dot{\mathbf{p}} &= 2\mathrm{Im}(\Omega_-)\mathbf{p}, \\
\mathbf{q}_{\mathrm{out}} &= -2\mathrm{Im}(C_-)\mathbf{p} + \mathbf{q}_{\mathrm{in}}, \\
\mathbf{p}_{\mathrm{out}} &= 2\mathrm{Re}(C_-)\mathbf{p} + \mathbf{p}_{\mathrm{in}}.
\end{aligned}\end{equation}
The second equation shows that $\mathbf{p}$ evolves independently of $\mathbf{q}$ and of the input quadratures that couple to $\mathbf{q}$. Hence $\mathbf{p}$ is a QND variable when the subsystem $(\mathrm{Im}(\Omega_-),0,-\mathrm{Im}(C_-))$ or
$(\mathrm{Im}(\Omega_-),0,\mathrm{Re}(C_-))$ is observable.  
The input–output relations in Eq. \eqref{eq:QND_p} directly imply the claimed BAE properties.

\noindent\textit{Case 2: $\Lambda_p = 0$, i.e.\ $C_- = C_+$ ( $\mathbf{q}$‑coupling).}
Assuming $\Omega_- = \Omega_+$, an analogous calculation yields the dynamical equations
\begin{equation}\label{eq:QND_q}
\begin{aligned}
\dot{\mathbf{q}} &= 2\mathrm{Im}(\Omega_-)\mathbf{q}, \\[2pt]
\dot{\mathbf{p}} &= -2\bigl[\mathrm{Re}(\Omega_-) + \mathrm{Im}(C_-^\dagger)\mathrm{Re}(C_-) + \mathrm{Re}(C_-^\dagger)\mathrm{Im}(C_-)\bigr]\mathbf{q} \\
                 &\quad - 2\mathrm{Im}(C_-^\dagger)\mathbf{q}_{\mathrm{in}} - 2\mathrm{Re}(C_-^\dagger)\mathbf{p}_{\mathrm{in}}, \\[2pt]
\mathbf{q}_{\mathrm{out}} &= 2\mathrm{Re}(C_-)\mathbf{q} + \mathbf{q}_{\mathrm{in}}, \\[2pt]
\mathbf{p}_{\mathrm{out}} &= 2\mathrm{Im}(C_-)\mathbf{q} + \mathbf{p}_{\mathrm{in}}.
\end{aligned}
\end{equation}
The first equation shows that $\mathbf{q}$ evolves independently of $\mathbf{p}$ and of the input quadratures that couple to $\mathbf{p}$. Hence $\mathbf{q}$ is a QND variable when the subsystem $(\mathrm{Im}(\Omega_-),0,\mathrm{Im}(C_-))$ or
$(\mathrm{Im}(\Omega_-),0,\mathrm{Re}(C_-))$ is observable.  
The input–output relations directly imply the stated BAE measurements.
\end{proof}

\begin{remark}
If $C_+ = 0$ and $C_-$ is real, the system matrices simplify to
\[
\mathbb{B} = -\begin{bmatrix} C_-^\dagger & 0 \\ 0 & C_-^\dagger \end{bmatrix},\qquad
\mathbb{C} = \begin{bmatrix} C_- & 0 \\ 0 & C_- \end{bmatrix},
\]
and
\[
\mathbb{A} = \begin{bmatrix} \mathrm{Im}(\Omega_-+\Omega_+) & \mathrm{Re}(\Omega_--\Omega_+) \\ -\mathrm{Re}(\Omega_-+\Omega_+) & \mathrm{Im}(\Omega_--\Omega_+) \end{bmatrix}
           - \frac{1}{2}\begin{bmatrix} C_-^\dagger C_- & 0 \\ 0 & C_-^\dagger C_- \end{bmatrix}.
\]
Even when $\Omega_- = \Omega_+$, which makes $\mathbb{A}$ block lower‑triangular,
the dynamics of $\mathbf{q}$ and $\mathbf{p}$ remain coupled through the off‑diagonal
block of $\mathbb{A}$.  Hence, while BAE measurement is feasible in this setting,
neither $\mathbf{q}$ nor $\mathbf{p}$ qualifies as a QND variable.
\end{remark}

\section{BAE measurements and QND variables synthesis}\label{synthesis}

\subsection{Realization of quantum BAE measurements via coherent feedback control}

As discussed in Section \ref{sec:qndbae}, when the system Hamiltonian matrix $\Omega$ is purely imaginary and the coupling operator matrix $\mathcal{C}$ is real or purely imaginary, quantum BAE measurements can be realized. However, if the sufficient conditions in Theorems \ref{lem:BAE1}-\ref{lem:BAE2} are not applied to the original system, we can cascade a beamsplitter and thus form a coherent feedback control network to realize quantum BAE measurements. Similar optical system structures of the coherent feedback control for squeezing enhancement \cite[Fig. 1(b)]{IYY+12} and noise reduction \cite[Fig. 1]{WZO2026} have been demonstrated experimentally. In this section, the original system $\boldsymbol{G}$, consisting of $n$ harmonic oscillators, $\mbf{a}=\left[\begin{array}{cccc}
\mbf{a}_1 & \mbf{a}_2 & \cdots & \mbf{a}_n
\end{array}\right]^\top$, is driven by $m=m_1+m_2$ input channels, whose $m_1$ input channels are described by $u_1$ in Fig. \ref{fig:system}, and the other $m_2$ input channels are represented by $u_2$. The system $\boldsymbol{G}$ is characterized by the following $(S,\mbf{L},\mbf{H})$ parameters
\begin{equation*}\begin{aligned}
S_G=\left[\begin{array}{cc}
S_{11} & S_{12} \\
S_{21} & S_{22}
\end{array}\right], \ \mbf{L}_G=\left[\begin{array}{c}
\mbf{L}_1 \\
\mbf{L}_2 
\end{array}\right], \ \Omega_G=\Delta(\Omega_-,\Omega_+),    
\end{aligned}\end{equation*}
where the coupling operators of the $m_1$ and $m_2$ channels are 
\begin{equation*}\begin{aligned}
\mbf{L}_1=k_{11}\mbf{a}+k_{12}\mbf{a}^\#, \\
\mbf{L}_2=k_{21}\mbf{a}+k_{22}\mbf{a}^\#,
\end{aligned}\end{equation*}
with the adjustable coupling strengths $k_{11}$, $k_{12}\in \mathbf{C}^{m_1 \times n}$, $k_{21}$, $k_{22}\in \mathbf{C}^{m_2 \times n}$.

\begin{figure}
    \centering
    \includegraphics[width=0.8\linewidth]{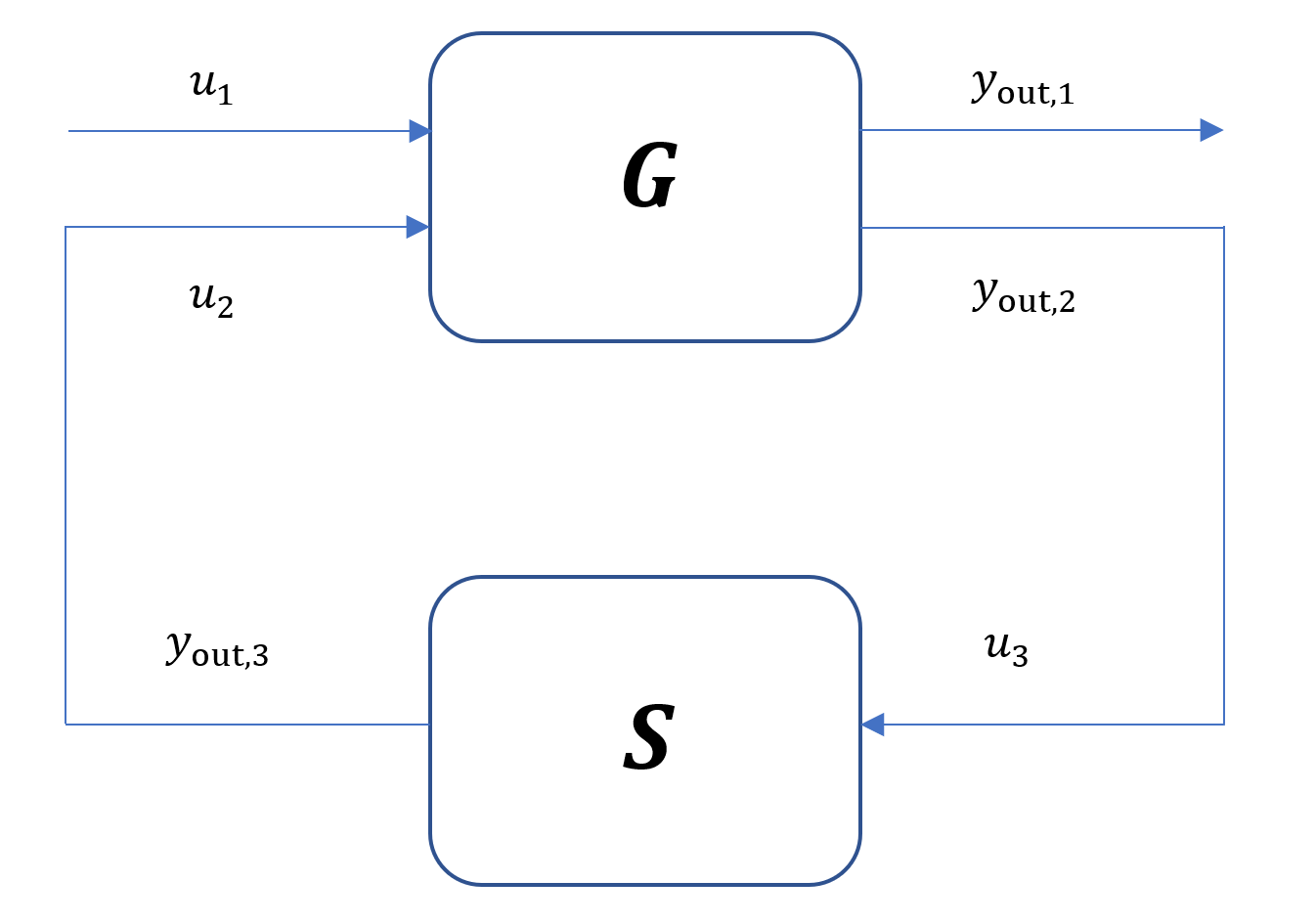}
    \caption{Coherent feedback network, where the original system’s dynamics are effectively modified by feeding back a portion of its output through a beamsplitter, thereby engineering the desired purely imaginary Hamiltonian matrix.}
    \label{fig:system}
\end{figure}

In contrast to the type-2 coherent feedback control scheme proposed in \cite[Fig. 9]{NY14}, we remove the controller and only use a beamsplitter $\boldsymbol{S}$ to form a coherent feedback network, see Fig. \ref{fig:system}. The feedback control is performed by modulating the corresponding $m_2$ output channels, i.e., $y_{\rm out,2}$, which is also the input $u_3$ of the beamsplitter $\boldsymbol{S}$, and the output $y_{\rm out,3}$ of the beamsplitter $\boldsymbol{S}$ is the input $u_2$ of the system $\boldsymbol{G}$. The following result shows that, if there exitsts a suitable choice of the coupling parameters, the coherent feedback network depicted in Fig. \ref{fig:system} can be engineered to achieve bilateral BAE measurements.




\begin{theorem}\label{thm:BAE}
If there exist complex coupling strength parameters $k_{ij}$, $i,j=1,2$, such that the overall coupling matrix $\bar{\mathcal{C}}$ is real or purely imaginary, and the overall Hamiltonian matrix $\bar{\Omega}$ is purely imaginary, then bilateral BAE measurements in the quantum coherent feedback network can be realized.  
\end{theorem}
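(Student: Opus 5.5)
The plan is to reduce the statement to Theorem~\ref{lem:BAE1} (or Theorem~\ref{lem:BAE2}) by showing that the closed-loop network in Fig.~\ref{fig:system} is itself a linear quantum system of the form \eqref{system:ani}. Its $(S,\mbf{L},\mbf{H})$ parameters $(\bar S,\bar{\mbf{L}},\bar{\mbf{H}})$ will be computed from those of $\boldsymbol{G}$ and of the beamsplitter $\boldsymbol{S}$ by the standard series-product and feedback-reduction rules of the $(S,\mbf{L},\mbf{H})$ formalism \cite{GJ09}. Once these closed-loop parameters are in hand, the hypotheses of the theorem are exactly the hypotheses of the earlier bilateral BAE results, provided the closed-loop scattering matrix $\bar S$ is real or purely imaginary.

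The first step is to compute the closed-loop parameters. Let the beamsplitter have real (or purely imaginary) scattering matrix $S_{\boldsymbol{S}}$; a lossless beamsplitter with real reflection and transmission coefficients will do. Eliminating the internal channel $u_2=y_{\rm out,3}=S_{\boldsymbol{S}}y_{\rm out,2}$ gives the feedback-reduction formulas:
\begin{equation*}
\begin{aligned}
\bar S&=S_{11}+S_{12}S_{\boldsymbol{S}}(I-S_{22}S_{\boldsymbol{S}})^{-1}S_{21},\\
\bar{\mbf{L}}&=\mbf{L}_1+S_{12}S_{\boldsymbol{S}}(I-S_{22}S_{\boldsymbol{S}})^{-1}\mbf{L}_2,\\
\bar{\mbf{H}}&=\mbf{H}_G+\mathrm{Im}\big(\mbf{L}_2^\dagger S_{\boldsymbol{S}}(I-S_{22}S_{\boldsymbol{S}})^{-1}\mbf{L}_2\big)+\mathrm{Im}\big(\mbf{L}_1^\dagger S_{12}S_{\boldsymbol{S}}(I-S_{22}S_{\boldsymbol{S}})^{-1}\mbf{L}_2\big),
\end{aligned}
\end{equation*}
where $\mathrm{Im}(\mbf{X})=(\mbf{X}-\mbf{X}^\ast)/(2\imath)$, in the usual operator sense. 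Since $\mbf{L}_i=k_{i1}\mbf{a}+k_{i2}\mbf{a}^\#$, $\bar{\mbf{L}}$ is linear in $\breve{\mbf{a}}$, which identifies $\bar{\mathcal{C}}=\Delta(\bar C_-,\bar C_+)$. The Hamiltonian $\bar{\mbf{H}}$ is quadratic in $\breve{\mbf{a}}$, so it can be written as $\frac12\breve{\mbf{a}}^\dagger\bar\Omega\breve{\mbf{a}}$ with a Hermitian doubled-up $\bar\Omega=\Delta(\bar\Omega_-,\bar\Omega_+)$, up to a scalar constant that does not affect the dynamics. This shows the network is again of the form \eqref{system:ani} with $\mathcal{A}=-\imath J_n\bar\Omega-\frac12\bar{\mathcal{C}}^\flat\bar{\mathcal{C}}$, $\mathcal{B}=-\bar{\mathcal{C}}^\flat\bar{\mathcal{D}}$ and $\bar{\mathcal{D}}=\Delta(\bar S,0)$.

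The second step is to verify that $\bar S$ is real or purely imaginary. If $S_G$ and $S_{\boldsymbol{S}}$ are real, then $\bar S$ is real and Theorem~\ref{lem:BAE1} applies: with $\bar\Omega$ purely imaginary and $\bar{\mathcal{C}}$ real or purely imaginary, the quadrature matrices $\mathbb{A},\mathbb{B},\mathbb{C},\mathbb{D}$ are block diagonal (or block off-diagonal, with $\mathbb{A}$ block diagonal). Hence $\mathbb{G}[s]$ is block diagonal, and the bilateral BAE of $\mbf{q}_{\rm out}$ with respect to $\mbf{p}_{\rm in}$, and of $\mbf{p}_{\rm out}$ with respect to $\mbf{q}_{\rm in}$, follows. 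If instead $\bar S$ is purely imaginary, Theorem~\ref{lem:BAE2} yields the other bilateral pair. I would state this case split explicitly, since the theorem is silent on $\bar S$; the natural reading is that the scattering parts are chosen real, for example $S_G=I$ with a real beamsplitter.

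The main obstacle is the first step, namely the feedback algebra. One must check that $I-S_{22}S_{\boldsymbol{S}}$ is invertible, so that the loop is well posed. One must also extract $\bar\Omega$ carefully from the $\mathrm{Im}(\cdot)$ terms, using Lemma~\ref{Jan29-2} and the commutators $[\mbf{a},\mbf{a}^\dagger]=I$, so as to confirm that the result is indeed doubled-up and Hermitian. After that, the hypotheses on $k_{ij}$ are precisely that $\bar{\mathcal{C}}$ and $\bar\Omega$ have the required reality structure, and the conclusion follows directly from the earlier theorems.
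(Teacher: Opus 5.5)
Your proposal follows the paper's proof. Both reduce the beamsplitter loop to an effective linear model $(S^{\rm red},\bar{\mathcal{C}},\bar\Omega)$ using the $(S,\mbf{L},\mbf{H})$ feedback-reduction formulas and then invoke Theorems~\ref{lem:BAE1}--\ref{lem:BAE2}. Your explicit requirement that the closed-loop scattering matrix be real or purely imaginary states a hypothesis the paper's proof uses silently, and it is genuinely needed: if that matrix has nonzero real and imaginary parts, then $\lim_{s\to\infty}\mathbb{G}[s]=\mathbb{D}$ already couples conjugate quadratures.

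There is one harmless slip. The internal-loop term of $\bar{\mbf{H}}$ should read $\mathrm{Im}\big(\mbf{L}_2^\dagger S_{22}S_{\boldsymbol{S}}(I-S_{22}S_{\boldsymbol{S}})^{-1}\mbf{L}_2\big)$; you dropped the factor $S_{22}$, and for $S_{22}=0$ this term must vanish. Since the hypothesis is imposed directly on the true closed-loop $\bar\Omega$, your argument is unaffected.
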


\begin{proof}
The total $(S,\mathbf{L},\mathbf{H})$ parameters of the coherent feedback network in Fig. \ref{fig:system} can be calculated as
\begin{equation*}\begin{aligned}
S^{\rm red}&=S_{11}+S_{12}S_b(I-S_{22}S_b)^{-1} S_{21}, \\
\mbf{L}^{\rm red}&=\left[\begin{array}{cc}
\bar{C}_- & \bar{C}_+  
\end{array}\right]\left[\begin{array}{c}
\mbf{a} \\
\mbf{a}^\#      
\end{array}\right],
\end{aligned}\end{equation*}
where $\bar{C}_-=k_{11}+S_{12}S_b(I-S_{22}S_b)^{-1}k_{21}$, $\bar{C}_+=k_{12}+S_{12}S_b(I-S_{22}S_b)^{-1}k_{22}$, and the overall system Hamiltonian matrix 
\begin{equation*}
\bar{\Omega}=\Delta(\bar{\Omega}_-,\bar{\Omega}_+),    
\end{equation*}
where 
\begin{equation*}\begin{aligned}
\bar{\Omega}_-=\Omega_--\imath(k_{11}^\dagger S_b k_{21}-k_{21}^\dagger S_b^\dagger k_{11}), \\ 
\bar{\Omega}_+=\Omega_+-\imath(k_{11}^\dagger S_b k_{22}-k_{21}^\dagger S_b^\dagger k_{12}).
\end{aligned}\end{equation*}

The feedback network therefore reduces to an open linear quantum system with effective parameters $S^{\rm red}$, $\bar{\mathcal{C}}=\Delta(\bar{C}_-,\bar{C}_+)$, and $\bar{\Omega}$. If coupling strengths $k_{ij}$ can be chosen such that $\bar{\Omega}$ is purely imaginary and $\bar{\mathcal{C}}$ is either real or purely imaginary, then the coherent feedback network satisfies the hypotheses of Theorems \ref{lem:BAE1}--\ref{lem:BAE2}, and bilateral BAE measurements follow immediately.
\end{proof}

\begin{remark}
In Theorem \ref{thm:BAE}, it is assumed the plant parameters $k_{ij}$ are tunable, this is called quantum re-engineering; see for example \cite[Ex. 4.4]{JG10}.
\end{remark}

The following numerical example validates Theorem \ref{thm:BAE}.

\begin{example}
Assume that the original system $\boldsymbol{G}$ is with the following system Hamiltonian parameters
\begin{equation*}
\Omega_-=\left[\begin{array}{cc}
2 & 3+2\imath \\
3-2\imath & 4
\end{array}\right], \ \Omega_+=\left[\begin{array}{cc}
2 & 3-\imath \\ 
3-\imath & 5
\end{array}\right].    
\end{equation*}    
Clearly, the system Hamiltonian matrix $\Omega$ does not satisfy the sufficient condition given in Proposition \ref{prop:BAE}. The coupling strengths of the upper $m_1$ channels in Fig. \ref{fig:system} are set as follows
\begin{equation*}\begin{aligned}
k_{11}=\left[\begin{array}{cc}
1 & 1+\imath 
\end{array}\right], \ \ k_{12}=\left[\begin{array}{cc}
1 & 2-\imath 
\end{array}\right],
\end{aligned}\end{equation*}
while the coupling strengths of the lower $m_2$ channels
\begin{equation*}\begin{aligned}
k_{21}=\left[\begin{array}{cc}
1+\imath & 1+\imath 
\end{array}\right], \ \ k_{22}=\left[\begin{array}{cc}
1+\imath & 2+2\imath 
\end{array}\right].
\end{aligned}\end{equation*}
Set the scattering matrix $S=-\imath I$, then the overall Hamiltonian matrix $\bar{\Omega}$ of the coherent feedback network can be calculated as
\begin{equation*}
\bar{\Omega}_-=\left[\begin{array}{cc}
0 & -\imath \\
\imath & 0
\end{array}\right], \ \bar{\Omega}_+=\left[\begin{array}{cc}
0 & \imath \\ 
\imath & 3\imath
\end{array}\right],    
\end{equation*}
which is purely imaginary. Moreover, the overall coupling 
\begin{equation*}
\left[\begin{array}{cc}
\bar{C}_- & \bar{C}_+ 
\end{array}\right]=\left[\begin{array}{cccc}
1 & 2 & 1 & 1
\end{array}\right].    
\end{equation*}
By Theorem \ref{lem:BAE2}, bilateral quantum BAE measurements of $\mbf{q}_{\rm out}$ with respect to $\mbf{q}_{\rm in}$ and $\mbf{p}_{\rm out}$ with respect to $\mbf{p}_{\rm in}$ in Fig. \ref{fig:system} are realized. Since the system is controllable, no QND variable is present.
\end{example}

\subsection{Realization of QND variables via direct coupling}

In contrast to the two coherent feedback control design schemes proposed in \cite{NY14}, we introduce an additional auxiliary mode as a controller, as shown in Fig. \ref{fig:system1}. This open-loop control design scheme indirectly measures information about the original system modes by adjusting the direct interaction Hamiltonian $H_{\rm int}$ between the controller and the original system. It is theoretically proven that a linear combination of the original system modes can be constructed as a QND observable.

\begin{figure}
    \centering
\includegraphics[width=0.8\linewidth]{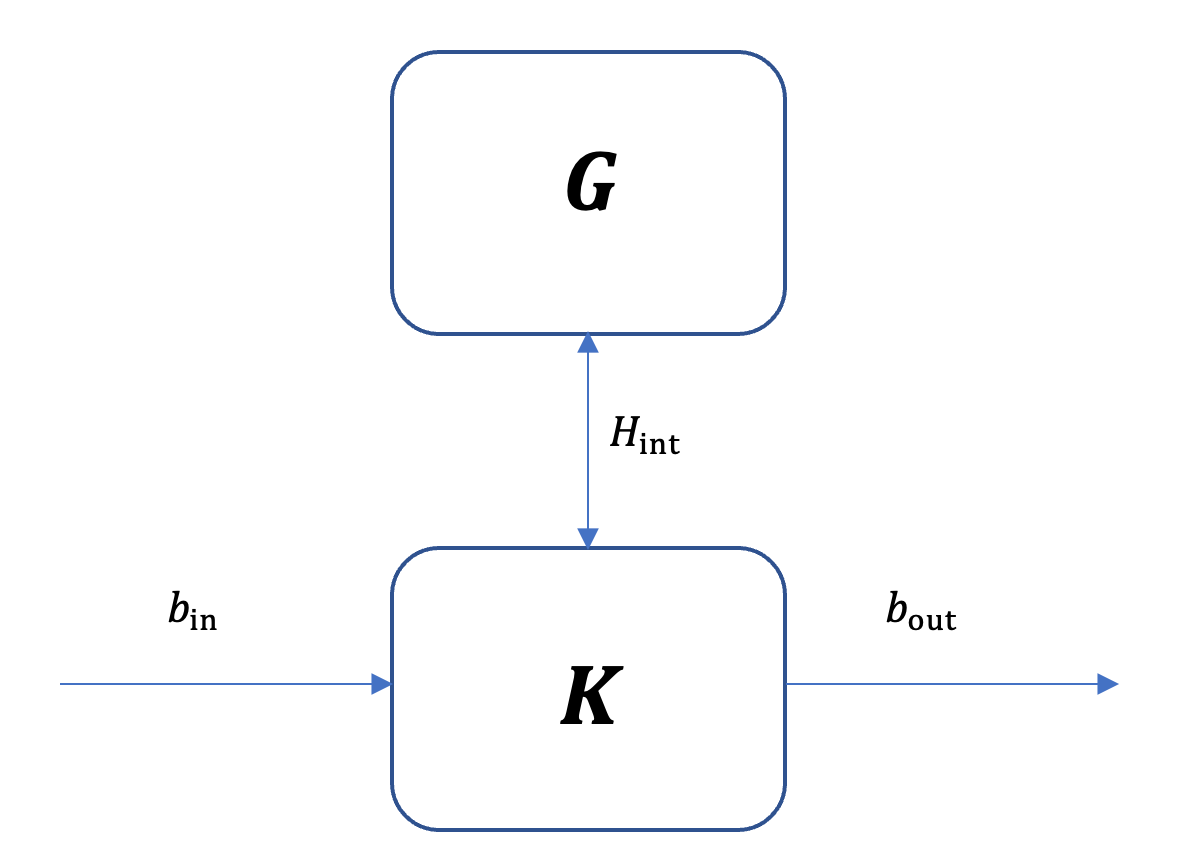}
    \caption{Quantum optomechanical system \cite{ZJ11,ZJ12,DCZF16}.}
    \label{fig:system1}
\end{figure}

The quantum optomechanical system in Fig. \ref{fig:system1} contains two cavity modes $\mbf{a}_1$, $\mbf{a}_2$ (the original system) and a mechanical oscillator mode $\mbf{a}_3$ (the controller). Under the rotating wave approximation, the total system Hamiltonian is given by \cite[Eq. (80)]{ZGPG18}
\begin{equation*}
\mbf{H}=-\frac{\Delta_1}{2}(\mbf{q}_1^2+\mbf{p}_1^2)-\frac{\Delta_2}{2}(\mbf{q}_2^2+\mbf{p}_2^2)+\frac{\omega_m}{2}(\mbf{q}_3^2+\mbf{p}_3^2)+(\lambda_1\mbf{q_1}+\lambda_2\mbf{q}_2)\mbf{q}_3,    
\end{equation*}
where the last term represents the interaction Hamiltonian $H_{\rm int}$ between cavity and mechanical oscillator. $\Delta_1$, $\Delta_2$ are the optical detunings for cavity modes $\mbf{a}_1$ and $\mbf{a}_2$, respectively. $\omega_m$ denotes the resonant frequency of the mechanical
oscillator. The coupling strengths between cavity and mechanical
oscillator are $\lambda_1$, $\lambda_2>0$. The coupling operator of the optomechanical system is $\mbf{L}=\sqrt{\kappa}\mbf{a}_3$, where $\kappa>0$ is the mechanical damping rate. According to the quantum Langevin equation, the system evolution equations can be calculated as
\begin{equation}\label{QNDevo}\begin{aligned}
\dot{\mbf{q}}_1&=-\Delta_1 \mbf{p}_1, \\
\dot{\mbf{p}}_1&=\Delta_1\mbf{q}_1-\lambda_1\mbf{q}_3, \\
\dot{\mbf{q}}_2&=-\Delta_2\mbf{p}_2, \\
\dot{\mbf{p}}_2&=\Delta_2\mbf{q}_2-\lambda_2\mbf{q}_3, \\
\dot{\mbf{q}}_3&=-\frac{\kappa}{2}\mbf{q}_3+\omega_m\mbf{p}_3-\sqrt{\kappa}\mbf{q}_{\rm in}, \\
\dot{\mbf{p}}_3&=-\omega_m\mbf{q}_3-\frac{\kappa}{2}\mbf{p}_3-\lambda_1\mbf{q}_1-\lambda_2\mbf{q}_2-\sqrt{\kappa}\mbf{p}_{\rm in},
\end{aligned}\end{equation}
and the input-output of the optomechanical system is
\begin{equation*}
\left[\begin{array}{c}
\mbf{q}_{\rm out} \\
\mbf{p}_{\rm out}
\end{array}\right]=\sqrt{\kappa}\left[\begin{array}{c}
\mbf{q}_3 \\
\mbf{p}_3
\end{array}\right]+\left[\begin{array}{c}
\mbf{q}_{\rm in} \\
\mbf{p}_{\rm in}
\end{array}\right].    
\end{equation*}

It can be verified that the uncontrollability of $\lambda_1\mathbf{q}_1+\lambda_2\mathbf{q}_2$ holds not only in the phase-shift regime $\Delta_1=\Delta_2=0$ \cite[Ex. 5.1(Case 3)]{ZGPG18} but also under the fixed parameter condition $\Delta_1=-\Delta_2\neq0$ and $\lambda_1=\lambda_2$. In the latter case, using the controllability matrix $\mathcal{C}_u\triangleq[B\;AB\;\cdots\;A^{n-1}B]$, one finds $c^\top\mathcal{C}_u=0$ with $c=[\lambda_1,0,\lambda_2,0,0,0]^\top$. That is, the linear combination $\lambda_1\mathbf{q}_1+\lambda_2\mathbf{q}_2$  lies in the orthogonal complement of the controllable subspace and is therefore uncontrollable. On the other hand, $\lambda_1\mathbf{q}_1+\lambda_2\mathbf{q}_2$ can be observed by the last two equations in Eq. \eqref{QNDevo}. Therefore, it is uncontrollable yet observable, which can be regarded as a QND variable. This is consistent with the QND variable given by the Kalman decomposition form derived in \cite[Ex. 5.1(Case 3)]{ZGPG18}. Moreover, when $\Delta_1=\Delta_2=0$ as stated in that phase-shift regime, either the cavity mode $\mbf{q}_1$ or $\mbf{q}_2$ can serve as a QND observable by tuning the coupling strength parameters $\lambda_1$ and $\lambda_2$ in Eq. \eqref{QNDevo}.

\begin{remark}
The parameter condition ($\Delta_1=-\Delta_2\neq0$, $\lambda_1=\lambda_2$) makes the two original system modes respond oppositely to the controller, so that the symmetric combination $\lambda_1\mathbf{q}_1+\lambda_2\mathbf{q}_2$ is completely decoupled from the input field. Consequently, this combination is uncontrollable yet remains observable from the output, which is exactly the required property for a QND variable that involves only the original system modes. Compared with the coherent feedback scheme proposed in \cite{NY14} which requires an auxiliary quantum controller and a feedback interconnection, the present approach using direct coupling (Fig.~2) is structurally simpler. Moreover, the resulting QND variable $\lambda_1\mathbf{q}_1+\lambda_2\mathbf{q}_2$ involves only the original system modes and does not contain any controller mode, facilitating experimental implementation and state estimation. 
\end{remark}

\section{Conclusion}\label{Conclu}

This paper has presented a systematic framework for realizing BAE measurements and QND variables in linear quantum systems from a control-engineering perspective. By analyzing the structural properties of the system parameters---the scattering matrix, coupling operator, and Hamiltonian---we have identified explicit conditions under which both bilateral and unilateral BAE measurements can be achieved. The analysis also reveals a fundamental connection between BAE measurements and QND interactions: the condition $[\mbf{L},\mbf{H}]=0$ not only guarantees that measurement back-action is evaded, but also singles out specific system observables that remain unperturbed by continuous monitoring.
For systems whose inherent parameters do not meet these conditions, we have shown that coherent feedback control provides an effective means to engineer the desired BAE performance, while QND variables can be realized through suitably designed direct coupling schemes. These results, developed consistently in annihilation-creation operator representation and further illuminated through the Kalman canonical form, provide a unified framework for the analysis and design of BAE measurements and QND variables in linear quantum systems. They offer clear guidelines for practical applications such as quantum sensing, gravitational wave detection, and quantum information processing. Future work may extend these techniques to nonlinear quantum systems and non-Markovian environments, further advancing the capabilities of quantum measurement and control.




\bmsubsection*{Funding}

This work is partially financially supported by Innovation Program for Quantum Science and Technology 2023ZD0300600, Guangdong Provincial Quantum Science Strategic Initiative No. GDZX2303007, Hong Kong Research Grant Council (RGC) under Grant No. 15213924, National Natural Science Foundation of China under Grants Nos. 62003111, 62473117, Natural Science Foundation of Guangdong Province under Grant No. 2025A1515060009, The Science Center Program of National Natural Science Foundation of China under Grant No. 62188101.

\bmsubsection*{Conflicts of Interest}

The authors declare no conflicts of interest.

\bmsubsection*{Data Availability Statement}

Data sharing is not applicable to this article as no datasets were generated or analyzed during the current study.

\bibliography{gzhang}



\end{document}